\newtheorem{theorem}{Theorem}
\newtheorem{lemma}{Lemma}
\newcommand{\II}{\ensuremath{\mathcal{I}}}
\newcommand{\MM}{\ensuremath\mathcal{M}}
\newcommand{\NN}{\ensuremath\mathcal{N}}
\newtheorem{observation}{Observation}
\title{On Reachable Assignments under Dichotomous Preferences\thanks{%
A preliminary version will appear in Proceedings of 
the 24th International Conference on Principles and Practice of Multi-Agent Systems
(PRIMA 2022).
This work was supported by JSPS KAKENHI Grant Numbers 
JP18H04091, JP19K11814, JP20H05793,
JP20H05795, JP21H03397, JP19H05485, JP20K14317,
JP20H05705, JP20K11670, JP18K03391, JP22H05001.}}
\author{Takehiro Ito\thanks{%
Tohoku University, 
{\ttfamily takehiro@tohoku.ac.jp}} \and
Naonori Kakimura\thanks{%
Keio University,  
{\ttfamily kakimura@math.keio.ac.jp}} \and
Naoyuki Kamiyama\thanks{%
Kyushu University, 
{\ttfamily kamiyama@imi.kyushu-u.ac.jp}} \and  
Yusuke Kobayashi\thanks{%
Kyoto University, 
{\ttfamily yusuke@kurims.kyoto-u.ac.jp}%
} \and
Yuta Nozaki\thanks{%
Hiroshima University, 
{\ttfamily nozakiy@hiroshima-u.ac.jp}%
} \and
Yoshio Okamoto\thanks{%
The University of Electro-Communications, 
{\ttfamily okamotoy@uec.ac.jp}
} \and
Kenta Ozeki\thanks{%
Yokohama National University, 
{\ttfamily ozeki-kenta-xr@ynu.ac.jp}}}
\date{}
\begin{document}

\maketitle

\begin{abstract}
We consider the problem of determining 
whether a target item assignment can be reached from 
an initial item assignment by a sequence of pairwise exchanges of 
items between agents. 
In particular, we consider the situation where 
each agent has a dichotomous preference over the items, 
that is, each agent evaluates each item as acceptable 
or unacceptable. 
Furthermore, we assume that 
communication between agents is limited, and 
the relationship is represented by an undirected graph. 
Then, a pair of agents can exchange their items only if 
they are connected by an edge and the involved items are acceptable. 
We prove that 
this problem is $\mathsf{PSPACE}$-complete 
even when the communication graph is complete  (that is, 
every pair of agents can exchange their items), and 
this problem can be solved in polynomial time if an input graph 
is a tree. 
\end{abstract} 

\section{Introduction}

\subsection{Our Contributions}
We consider the following problem.
We are given a set of agents and a set of items.
There are as many items as agents.
Each agent has a \emph{dichotomous preference} over the items,
that is, each agent evaluates each item as acceptable or unacceptable. 
(See, e.g., \cite{BM04} 
for situations where dichotomous preferences naturally 
arise.)
Over the set of agents, we are given a communication graph.
We are also given two assignments of items to agents, where
each agent receives an acceptable item.
Now, we want to determine whether one assignment can be reached from the other assignment by rational exchanges.
Here, a rational exchange means that each of the two agents
accepts the item assigned to the other, and
they are joined by an edge in the communication graph.

We investigate algorithmic aspects of this problem.
Our results are two-fold.
We first prove that 
our problem can be 
solved in polynomial time if the communication graph is a tree.
Second, we prove that 
our problem is $\mathsf{PSPACE}$-complete 
even when the communication graph is complete (that is, every 
pair of agents can exchange their items).
This $\mathsf{PSPACE}$-completeness result shows an 
interesting contrast to 
the $\mathsf{NP}$-completeness in the strict preference case~\cite{MB20}. 

The question studied in this paper is related to the generation of a random assignment.
Bogomolnaia and Moulin~\cite{BM04} stated several good properties of random assignments in situations with dichotomous preferences.
One of the typical methods for generating a random assignment is based on the Markov chain Monte Carlo method \cite{LP17}.
In this method, we consider a sequence of small changes for assignments and hope that the resulting assignment is sufficiently random.
For this method to work, we require all possible assignments can be reached from an arbitrary initial assignment, i.e., the irreducibility of the Markov chain.
This paper studies such an aspect of random assignments under dichotomous preferences from the perspective of combinatorial reconfiguration \cite{N18}.

\subsection{Backgrounds}

The problem of assigning indivisible items to agents has been extensively studied 
in algorithmic game theory and computational 
social choice (see, e.g., \cite{M13,KlausMR16}). 
Applications of this kind of problem 
include 
job allocation, college admission, school choice, 
kidney exchange, and junior doctor allocation to 
hospital posts. 
When we consider this kind of problem, we implicitly assume that 
agents can observe the situations of all the agents 
and freely communicate with others. 
Recently, assignment problems without these assumptions 
have been studied. 
For example, fairness concepts based on limited 
observations on others
have been considered in \cite{AKP17,ABCGL18,BCGLMW19,BKN18,FMT19}. 
In a typical setting in this direction, we are given 
a graph defined on the agents and fairness properties are defined on 
a pair of agents joined by an edge of 
the graph or the neighborhoods of vertices. 
This paper is concerned with the latter assumption, that is, 
we consider assignment problems in the situation
where the communication between agents is limited.  

Our problem is concerned with situations where  
each agent is initially endowed with a single item:
Those situations commonly arise in the housing market problem~\cite{SS74}.
In the housing market problem, 
the goal is to reach one of the desired item assignments by exchanging 
items among agents from the initial assignment. 
For example, the top-trading cycle algorithm proposed by 
Shapley and Scarf~\cite{SS74} is one of the most fundamental 
algorithms for this problem, and 
variants of the top-trading cycle algorithm have been 
proposed (see, e.g., \cite{AS99,AD12}). 
As said above, in the standard housing market problem, we assume that any pair of agents can 
exchange their items. 
However, in some situations, this assumption does not seem to be 
realistic. For example, when we consider trading among a large
number of agents, it is natural to consider that agents can 
exchange their items only if they can communicate with 
each other. 
Recently, the setting with restricted exchanges has been considered
\cite{GLW17,HX20,LPS21,MB20}. 
More precisely, 
we are given an undirected graph defined on the agents 
representing possible exchanges, and 
a pair of agents can exchange their items only if 
they are joined by an edge.

Gourv\`{e}s, Lesca, and Wilczynski~\cite{GLW17} initiated 
the algorithmic research of exchanges over social networks in the housing market problem.
They assumed that each agent has a strict preference 
over the items, and considered the question that asks which allocation of the items can emerge
by rational exchanges between two agents. 
More concretely, they considered the problem of determining 
whether a target assignment can be reached from 
an initial assignment by rational exchanges between two agents. 
Here a rational exchange means that
both agents prefer the item assigned to the other to 
her/his currently assigned item and they are joined by an edge. 
We can see that 
if the target assignment is reachable from the initial 
assignment, then the target assignment can emerge by 
decentralized rational trades between agents. 
Gourv\`{e}s, Lesca, and Wilczynski~\cite{GLW17}
proved that 
this problem is $\mathsf{NP}$-complete in general, and 
can be solved in polynomial time
when the communication graph is a tree. 
Later,  M\"{u}ller and Benter~\cite{MB20} 
proved that this problem is $\mathsf{NP}$-complete even 
when the communication graph is complete, and 
can be solved in polynomial time 
when the communication graph is a cycle. 

In addition to reachability between assignments 
by rational exchanges, the problem of determining 
whether an assignment where a specified agent 
receives a target item can be reached from an initial 
assignment by rational exchanges has been studied.
Gourv\`{e}s, Lesca, and Wilczynski~\cite{GLW17} proved that 
this problem is $\mathsf{NP}$-complete even when the communication graph
is a tree.  
Huang and Xiao~\cite{HX20} proved that 
this problem can be solved in polynomial time 
when the communication graph is a path. 
In addition, they proved the $\mathsf{NP}$-completeness 
and the polynomial-time solvability in stars 
for preferences that may contain ties. 

Li, Plaxton, and Sinha~\cite{LPS21}
considered the following variant of the model mentioned above~\cite{GLW17,HX20,MB20}.
In their model, we are given a graph defined on 
the items and an exchange between some agents is allowed 
if their current items are joined by an edge. 
For this model, Li, Plaxton, and Sinha~\cite{LPS21}
proved similar results to the results for the former model \cite{GLW17,HX20,MB20}.

Our problem can be regarded as one kind of 
problems where we are given an initial configuration and 
a target configuration of some combinatorial objects, and 
the goal is to check the reachability between these two 
configurations via some specified operations.
In theoretical computer science,
this kind of problem has been studied 
under the name of \emph{combinatorial reconfiguration}. 
The algorithmic studies of combinatorial reconfiguration 
were initiated by Ito et al.~\cite{IDHPSUU11}.
See, e.g., \cite{N18} for a survey of combinatorial 
reconfiguration. 
In Section~\ref{setion:pspace_complete}, we use a known 
result in combinatorial reconfiguration. 

\section{Preliminaries}

Assume that we are given a finite set $N$ of agents and 
a finite set $M$ of items such that $|N| = |M|$.
For each item $j \in M$, we are given a subset $N_j$ of agents
who can accept $j$.
For each agent $i \in N$, define a subset $M_i \subseteq M$ as the set of acceptable items in $M$, i.e.,
$j \in M_i$ if and only if $i \in N_j$.
For a subset $X \subseteq M$, we define
$N_X = \bigcup_{j \in X}N_j$.
We define the ordered families $\MM$ and $\NN$ as
$\MM = (M_i \mid i \in N)$ and $\NN = (N_j \mid j \in M)$.
Furthermore, we are given an undirected graph $G = (N, E)$.

The setup can be rephrased in terms of graphs.
From the family $\NN=(N_j \mid j \in M)$, we may define the
following bipartite graph $H$.
The vertex set of $H$ is $N\cup M$, and two vertices $i\in N$ and $j \in M$ are joined by an edge if and only if $i \in N_j$ (or equivalently, $j \in M_i$).
The graph $G$ is defined over the set $N$.
See \figurename~\ref{fig:itemalloc_graph1}.

\begin{figure}[t]
\centering
\includegraphics[scale=1]{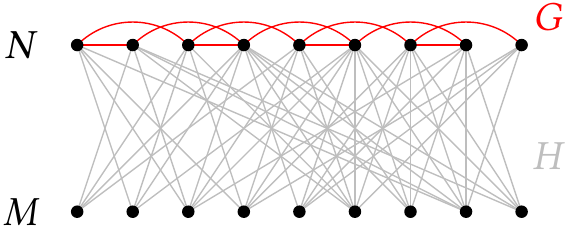}
\caption{The graph representation. Graph $G$ is shown in red, and graph $H$ is shown in gray.}
\label{fig:itemalloc_graph1}
\end{figure}

A bijection $a \colon N \to M$ is called an \emph{assignment}
if $a(i) \in M_i$ for every agent $i \in N$, i.e., 
$a(i)$ is an item that is acceptable for $i$.
By the assignment $a$, we say an item $j$ is \emph{assigned} to
an agent $i$ if $a(i)=j$.
In terms of the graph $H$, an assignment corresponds to a \emph{perfect matching} of $H$.
Hall's marriage theorem states that a perfect matching of $H$ exists if and only if $|S| \leq |N_S|$ for all $S\subseteq M$.
Hall's marriage theorem will be used in the next section to prove our theorems.

For a pair of assignments $a, b \colon N \to M$, 
we write $a \to b$ if 
there exist distinct agents $i, i^{\prime} \in N$ 
satisfying the following two conditions.
\begin{itemize}
\item 
For every agent $k \in N \setminus \{i,i^{\prime}\}$, 
$a(k) = b(k)$. 
\item 
$a(i) = b(i^{\prime})$, 
$a(i^{\prime}) = b(i)$, and 
$\{i,i^{\prime}\} \in E$.
\end{itemize}
See \figurename~\ref{fig:itemalloc_swap1}.
As a handy notation, we use $a(Y) = \{a(i) \mid i \in Y\}$ for every $Y \subseteq N$ and $a^{-1}(X) = \{a^{-1}(j) \mid j \in X\}$ for every $X \subseteq M$.

\begin{figure}[t]
\centering
\includegraphics[scale=1]{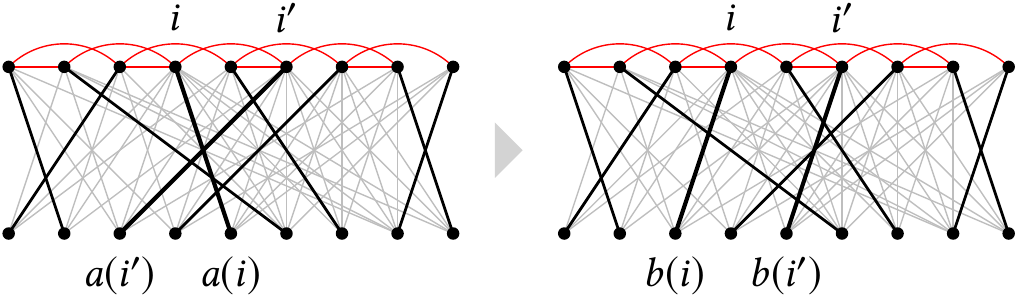}
\caption{An exchange operation. Assignments are drawn with thick black segments as perfect matchings.}
\label{fig:itemalloc_swap1}
\end{figure}

Our problem is defined as follows. 
An instance is specified by a $6$-tuple 
$\II = (N, M, \NN, G, a, b)$,
where $a$ and $b$ are assignments.
The goal is to determine whether 
there exists a sequence $a_0, a_1, \dots, a_{\ell}$ of assignments
such that 
$a_{t -1} \to a_t$ 
for every integer $t \in \{1,2,\ldots,\ell\}$,
$a_0 = a$, and $a_{\ell} = b$. 
In this case, we say that $a$ can be \emph{reconfigured} to $b$, or $b$ is \emph{reachable} from $a$.
Observe that $a_0^{-1}(j), a_1^{-1}(j), \dots , a_\ell^{-1}(j)$ are in the same connected component of $G[N_j]$, where $G[N_j]$ is the subgraph of $G$ induced by $N_j$. 
Thus, when we consider the reachability of the assignments, we may assume that $G[N_j]$ is connected for every $j \in M$ without loss of generality. 

For the family $\NN$, 
a non-empty subset $X \subseteq M$ of items is \emph{stable} if
$|X| = \left|N_X\right|$. 
We remind that $N_X = \bigcup_{j \in X}N_j$.
A stable subset $X \subseteq M$ is \emph{proper} 
if $\emptyset \not= X \subsetneq M$.

\section{Trees: A Characterization}

In this section, we consider the case when $G$ is a tree.  
We give a sufficient condition for the reachability of the assignments, which is essential to design a polynomial-time algorithm in Section~\ref{sec:tree:alg}.
As described in the previous section, it suffices to deal with the case when $G[N_j]$ is connected for every $j \in M$.

\begin{theorem}\label{thm:chartree}
Suppose that $G$ is a tree and $G[N_j]$ is connected for every $j \in M$. 
If there exists no proper stable subset of items in $M$, then 
every assignment can be reconfigured to any other assignment. 
\end{theorem}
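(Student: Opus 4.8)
The plan is to argue by induction on $n = |N| = |M|$, exploiting that the relation $\to$ is symmetric (the same pair $i,i'$ can swap back), so reachability is an equivalence relation and it suffices to exhibit a single class. For $n \le 2$ the claim is immediate. For the inductive step I fix a leaf $v$ of the tree $G$ and let $u$ be its unique neighbor. The first thing I would record is a structural consequence of the hypotheses: since $M$ has no proper stable subset, every singleton satisfies $|N_{\{j\}}| \ge 2$, so each subtree $G[N_j]$ has at least two vertices; as $v$ is a leaf, any such subtree containing $v$ must also contain $u$. Hence $M_v \subseteq M_u$, and the exchange between $u$ and $v$ is legal whenever the item currently held by $u$ belongs to $M_v$; in particular, pushing an item of $M_v$ from $u$ onto $v$ is always allowed, and the item displaced onto $u$ is then automatically acceptable.

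Next I would isolate two lemmas. \emph{Routing lemma}: from an arbitrary assignment one can reconfigure so that $v$ holds any prescribed item $j^\ast \in M_v$. By the free final swap just described, it suffices to bring $j^\ast$ to $u$; since $G[N_{j^\ast}]$ is a subtree containing the current holder $a^{-1}(j^\ast)$ and $u$, there is a unique path between them inside $N_{j^\ast}$, along which I slide $j^\ast$ one edge at a time. \emph{Safe-deletion lemma}: there exists $j_0 \in M_v$ such that the reduced instance on $N \setminus \{v\}$ and $M \setminus \{j_0\}$, with the tree $G - v$ and the induced neighborhoods, again has no proper stable subset (the induced neighborhoods stay connected because deleting a leaf from each subtree preserves connectivity). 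Granting both, the theorem follows: choose the structural item $j_0$, route it to $v$ from both $a$ and $b$, apply the induction hypothesis to connect the two resulting restrictions on $N\setminus\{v\}$ by exchanges inside $G-v$ (which never touch $v$ and therefore lift to $G$ while keeping $j_0$ fixed at $v$), and finally undo the routing on the $b$ side. Symmetry of $\to$ chains these into $a \to^\ast b$.

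For the safe-deletion lemma I would work with the submodular function $f(X) = |N_X| - |X|$, which the hypothesis forces to satisfy $f(X) \ge 1$ on every proper nonempty $X$ while $f(M)=0$. Deleting $v$ lowers $f$ by exactly one on each set whose neighborhood contains $v$, so the reduced instance acquires a proper stable subset precisely when some \emph{dangerous} set --- proper, nonempty, with $v \in N_X$, $f(X)=1$, and distinct from $M\setminus\{j_0\}$ --- avoids $j_0$. I would show that a suitable $j_0 \in M_v$ meets every dangerous set (up to the allowed exception) by uncrossing: for dangerous $X,Y$, submodularity gives $f(X\cup Y)+f(X\cap Y)\le 2$, and here the tree structure is essential, since $M_v \subseteq M_u$ forces $u,v \in N_X$ for every dangerous $X$, and the connected-subtree condition rules out the crossing patterns that would otherwise let two dangerous sets be separated by every choice of $j_0$.

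The main obstacle is the routing lemma, specifically clearing the sliding path. Sliding $j^\ast$ from $p_i$ onto the next vertex $p_{i+1}$ requires $p_i$ to accept the item $p_{i+1}$ currently holds, which need not happen. I would resolve this by a secondary induction on the size of the subtree hanging beyond $p_{i+1}$: using that there is no proper stable subset, one always has enough slack to first route some $p_i$-acceptable item onto $p_{i+1}$ from within that subtree before advancing $j^\ast$. Making this induction precise, and checking that the manoeuvres create no irreversible commitment, is the technical heart of the argument; by contrast the safe-deletion lemma is a comparatively self-contained uncrossing computation.
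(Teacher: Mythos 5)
Your reduction to the two lemmas is where the argument breaks: the safe-deletion lemma is simply false for an arbitrarily fixed leaf, and the routing lemma (which you yourself flag as the technical heart) is left unproven. For safe deletion, take $G$ to be the star with center $u$ and leaves $v,w,x$, and four items with $N_1=\{v,u,x\}$, $N_2=\{v,u,w\}$, $N_3=\{u,w\}$, $N_4=\{u,x\}$. Each $G[N_j]$ is connected, assignments exist (e.g.\ $v\mapsto 2$, $w\mapsto 3$, $x\mapsto 4$, $u\mapsto 1$), and one checks directly that every nonempty proper $X\subseteq M$ satisfies $|N_X|\ge |X|+1$, so the hypotheses of the theorem hold. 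Here $M_v=\{1,2\}$. Deleting $v$ together with $j_0=1$ leaves $N'_2=N'_3=\{u,w\}$, so $\{2,3\}$ becomes a proper stable subset of the reduced instance; deleting $v$ together with $j_0=2$ leaves $N'_1=N'_4=\{u,x\}$, so $\{1,4\}$ becomes stable. Hence no admissible $j_0\in M_v$ exists, and your uncrossing argument cannot rescue this: the two dangerous sets $\{2,3\}$ and $\{1,4\}$ are disjoint, partition $M$, meet the agent side exactly in $\{u,v\}$, and neither has the exceptional form $M\setminus\{j_0\}$ with $j_0\in M_v$ --- all of which is perfectly compatible with $G$ being a tree and with $M_v\subseteq M_u$. (In this instance deletion happens to work at the leaves $w$ and $x$, but you fix $v$ arbitrarily, and you give no argument that \emph{some} leaf admits a good $j_0$; that would be a new, nontrivial claim.)

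Second, even granting safe deletion, the routing lemma carries essentially the full content of the theorem, and your sketch of it does not go through. The proposed secondary induction ``on the subtree hanging beyond $p_{i+1}$'' has no valid inductive hypothesis: the no-proper-stable-subset condition is global and is \emph{not} inherited by subtrees (restriction can create stable sets), so the claimed slack needed to pre-load a $p_i$-acceptable item onto $p_{i+1}$ is unsubstantiated. This is exactly the difficulty the paper's proof is organized around. The paper distinguishes whether some $X$ with $N_X\neq N$ has $|N_X|=|X|+1$: when instead every proper neighborhood has slack at least $2$, leaf deletion in your style is sound (this is the paper's Case~2); but when such a tight set $X$ exists, the paper takes a minimal one, proves $G[N_X]$ is connected, and \emph{contracts} $N_X$ to a single vertex rather than deleting a leaf --- precisely the move needed for instances like the star above, where $X=\{3\}$ or $X=\{4\}$ is tight. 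Without this case distinction, both of your lemmas fail or remain out of reach, so the induction does not close.
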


We prove the theorem by induction on $|N|$. 
When $|N|=1$, the claim is obvious. 

Consider an instance $(N, M, \NN, G, a, b)$ with $|N| \ge 2$. 
Assume that there exists no proper stable subset of items in $M$, i.e., 
$|N_X| \ge |X|+1$ for any nonempty subset $X \subsetneq M$. 
We consider the following two cases separately: 
\begin{enumerate}
\item
There exists a subset $X \subseteq M$ such that $N_X \not= N$ and $|N_X| = |X|+1$. 
\item
For any nonempty subset $X \subseteq M$, we have that $N_X=N$ or $|N_X| \ge |X|+2$. 
\end{enumerate}

\subsection{Case 1}

Suppose that there exists a subset $X \subseteq M$ such that $N_X \not= N$ and $|N_X| = |X|+1$. 
Among such sets, let $X$ be an inclusionwise minimal one. 
Note that $X \not= M$. 

\begin{lemma} 
\label{clm:01}
$G[N_X]$ is connected. 
\end{lemma} 
\begin{proof}
Assume to the contrary that $G[N_X]$ is not connected. 
Then, there exists a partition $X_1, \cdots , X_t$ of $X$ with $t \ge 2$ such that 
$G[N_{X_1}], \dots , G[N_{X_t}]$ are distinct connected components of $G[N_X]$. 
Since there exists no proper stable subset, we obtain
$|N_{X_i}| > |X_i|$ for $i=1, \dots , t$. 
Hence,  
$|N_X| = \sum  |N_{X_i}| \ge \sum (|X_i|+1) \ge |X| + t >  |X|+1$, which is a contradiction. 
\end{proof}

We denote $R := N_X$ to simplify the notation. 
The idea is to consider the inside of $G[R]$ and the graph obtained from $G$ by shrinking $R$, separately. 

Since $|R| = |X|+1$, we observe the following. 
\begin{observation}
\label{obs:01}
For any assignment $c \colon N \to M$, 
there exists an item $j \in M \setminus X$ such that $c(R) = X \cup \{j\}$. 
\end{observation}

For an item $j \in M \setminus X$, 
a bijection $c' \colon R \to X \cup \{j\}$ is called an {\em assignment in $R$ using $j$}
if $c'(i) \in M_i$ for any $i \in R$. 
If $j$ is clear from the context, it is simply called an {\em assignment in $R$}. 

\begin{lemma} 
\label{clm:02}
Let $j$ be an item in $M \setminus X$ and let $i$ be an agent in $N_j \cap R$. 
Then, there exists an assignment $c'$ in $R$ such that $c'(i) = j$. 
\end{lemma} 

\begin{proof}
It suffices to show the existence of an appropriate bijection from $R \setminus \{i\}$ to $X$. 
For any nonempty subset $S \subseteq X$, we obtain 
$|N_S| \ge |S|+1$ as there exists no proper stable set.
This shows that $|S| \le |N_S \setminus \{ i \}|$ holds for all $S \subseteq X$. 
Therefore, a desired assignment $c'$ exists by Hall's marriage theorem. 
\end{proof}

\begin{lemma} 
\label{clm:03}
Let $j$ be an item in $M \setminus X$. 
Define $N' := R$, $M' := X \cup \{j\}$, and 
$\NN' := (N_{j'} \cap R \mid j' \in X \cup \{j\})$. 
If $|N_j \cap R| \ge 2$, then   
 $(N', M', \NN', G[R], a', b')$ is a yes-instance (i.e., $a'$ can be reconfigured to $b'$) for any assignments $a'$ and $b'$ in $R$.
\end{lemma}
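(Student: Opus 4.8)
The plan is to apply the induction hypothesis of Theorem~\ref{thm:chartree} to the sub-instance $(N', M', \NN', G[R], a', b')$. This instance has $|N'| = |R| = |N_X| = |X|+1$ agents, which is strictly smaller than $|N|$ because $N_X \neq N$ forces $R = N_X \subsetneq N$. Hence, if I can verify that the sub-instance meets the three hypotheses of the theorem---that $G[R]$ is a tree, that $G[R][N_{j'}\cap R]$ is connected for every $j' \in M'$, and that $M'$ has no proper stable subset with respect to $\NN'$---then the induction hypothesis immediately yields that $a'$ can be reconfigured to $b'$, as desired. I first note that the assignments in $R$ using $j$ are exactly the assignments of this sub-instance, since for $i \in R$ the set of acceptable items is $M_i \cap (X \cup \{j\})$.

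The two structural hypotheses are easy. The graph $G[R]$ is connected by Lemma~\ref{clm:01} and is an induced subgraph of the tree $G$, hence a tree. For the connectivity of item-neighborhoods, fix $j' \in M' = X \cup \{j\}$. If $j' \in X$, then $N_{j'} \subseteq N_X = R$, so $N_{j'} \cap R = N_{j'}$ and $G[N_{j'}]$ is connected by assumption. For $j'=j$, I would invoke the Helly-type property of subtrees: in a tree, if $A$ and $B$ each induce a connected subgraph, then so does $A\cap B$, because for any two vertices of $A\cap B$ the unique path joining them in the tree lies inside both $A$ and $B$. Applying this with $A=N_j$ and $B=R$ (connected by Lemma~\ref{clm:01}) shows that $G[N_j\cap R]$ is connected.

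The heart of the argument is to rule out proper stable subsets of $M'$, and this is where I expect the real work to lie. Let $Y$ be a nonempty proper subset of $M'$ and write $N_Y\cap R$ for its neighborhood in $\NN'$; the goal is $|N_Y\cap R| > |Y|$. If $j \notin Y$ then $Y\subseteq X$, so $N_Y\subseteq R$ and $N_Y\cap R = N_Y$, whence $|N_Y| \ge |Y|+1$ since $M$ has no proper stable subset. If $j\in Y$, put $Y' := Y\setminus\{j\}$, which is a proper subset of $X$ because $Y \subsetneq M'$. For $Y'=\emptyset$ we have $Y=\{j\}$ and the neighborhood is exactly $N_j\cap R$, so the hypothesis $|N_j\cap R|\ge 2 > |Y|$ is precisely what excludes the singleton $\{j\}$. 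For $Y'\neq\emptyset$, suppose to the contrary that $|N_Y\cap R| = |Y| = |Y'|+1$; since $N_{Y'}\subseteq N_Y\cap R$ and $|N_{Y'}|\ge |Y'|+1$, this would force $|N_{Y'}|=|Y'|+1$, and because $N_{Y'}\subseteq R \neq N$ the set $Y'$ would be a nonempty proper subset of $X$ with $N_{Y'}\neq N$ and $|N_{Y'}|=|Y'|+1$, contradicting the inclusionwise minimality of $X$.

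Thus the only genuine obstacle is this last verification, which simultaneously explains both special features of the statement: the hypothesis $|N_j\cap R|\ge 2$ is exactly what kills the singleton $\{j\}$, and the minimality of $X$ is exactly what kills every other subset of $M'$ containing $j$. With no proper stable subset remaining, the sub-instance satisfies all the conditions of Theorem~\ref{thm:chartree}, and the induction on $|N|$ closes the argument.
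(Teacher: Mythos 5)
Your proof is correct and follows essentially the same route as the paper: verify that the sub-instance $(N', M', \NN', G[R], a', b')$ satisfies the hypotheses of Theorem~\ref{thm:chartree} via the same three-case analysis (the cases $j \notin Y$, $Y = \{j\}$, and $Y = Y' \cup \{j\}$ with $Y'$ nonempty), establish connectivity of the item-neighborhoods from the tree structure, and invoke the induction hypothesis since $|R| < |N|$. The only cosmetic difference is that the paper handles the case $Y = X' \cup \{j\}$ directly by deriving $|N_{X'}| \ge |X'|+2$ from the minimality of $X$, whereas you phrase the identical argument as a proof by contradiction.
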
 

\begin{proof}
We first show that $|N'_Y| \ge |Y|+1$ for any nonempty subset $Y \subsetneq M'$, 
where  $N'_Y := N_Y \cap R$, by the following case analysis. 
\begin{itemize}
\item
Suppose that $j \not\in Y$. In this case, $|N'_Y| = |N_Y| \ge |Y|+1$ holds as $M$ has no proper stable subset. 
\item
Suppose that $Y = X' \cup \{j\}$ holds for some nonempty subset $X' \subsetneq X$. 
Since $|N_{X'}| \ge |X'| + 2$ by the minimality of  $X$, we obtain 
$|N'_Y| = |N_Y \cap R| \ge |N_{X'} \cap R| = |N_{X'}| \ge |X'| + 2 \ge |Y|+1$. 
\item
Suppose that $Y = \{j\}$. 
In this case, $|N'_Y| = |N_j \cap R| \ge 2 = |Y|+1$ by the assumption. 
\end{itemize}
Therefore, we obtain $|N'_Y| \ge |Y|+1$ for each case. 
We also see that $G[N_{j'} \cap R]$ is connected for each $j' \in X \cup \{j\}$, because $G[N_{j'}]$ and $G[R]$ are connected (see 
Lemma~\ref{clm:01}) 
and $G$ is a tree. 
Since $|N'| < |N|$, by applying the induction hypothesis, we see that $(N', M', \NN', G[R], a', b')$ is a yes-instance. 
\end{proof}

By using these lemmas, 
we have the following. 

\begin{lemma} 
\label{clm:04}
Let $j$ be an item in $M \setminus X$. 
Define $N' := R$, $M' := X \cup \{j\}$, and 
$\NN' := (N_{j'} \cap R \mid j' \in X \cup \{j\})$. 
Let $i_1, i_2 \in N_j \cap R$ be agents and let $a'$ be an assignment in $R$
such that $a'(i_1) = j$. 
Then, there exists an assignment $b'$ in $R$ such that $b'(i_2) = j$ 
and $(N', M', \NN', G[R], a', b')$ is a yes-instance (i.e., $a'$ can be reconfigured to $b'$). 
\end{lemma} 

\begin{proof}
If $i_1 = i_2$, then $b'=a'$ satisfies the condition. 
Otherwise, since $|N_j \cap R| \ge |\{i_1, i_2\}| = 2$, the lemma  
holds by Lemmas~\ref{clm:02} and~\ref{clm:03}. 
\end{proof}

The following lemma shows that any assignment can be reconfigured to an assignment for which we can apply Lemma~\ref{clm:03} 
in $G[R]$. 

\begin{lemma} 
\label{clm:05}
Let $c \colon N \to M$ be an assignment. 
Then, there exist an assignment $c^*\colon N \to M$ and 
an item $j^* \in M \setminus X$ such that $c^*(R) = X \cup \{j^*\}$, $|N_{j^*} \cap R| \ge 2$, and 
$c$ can be reconfigured to $c^*$. 
\end{lemma}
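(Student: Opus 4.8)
The plan is to reduce the whole question to a single ``branch'' of the tree hanging off the agent that currently holds the extra item, and to locate a good item there.

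First I would dispose of the easy case. By Observation~\ref{obs:01} we have $c(R)=X\cup\{j\}$ for some $j\in M\setminus X$, and if $|N_j\cap R|\ge 2$ we may simply take $c^*=c$ and $j^*=j$. So assume $|N_j\cap R|=1$, say $N_j\cap R=\{p\}$; then $p$ is the agent holding $j$. The two structural facts I would record next are: (i) no item of $X$ can ever leave $R$, since its acceptors lie in $N_X=R$; and (ii) a ``bad'' extra item (one with a single acceptor in $R$) cannot be moved inside $R$, because the neighbour it would move to does not accept it. Consequently, as long as the extra item stays bad, the only agent through which items can cross the boundary of $R$ is $p$, and any reconfiguration is confined to the subtree $V_p:=\{p\}\cup B_p$, where $B_p$ is the union of the components of $G-R$ attached to $p$. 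I would treat $V_p$ as a \emph{closed subsystem}.

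Next I would show that a good item already sits inside this subsystem. Let $\Pi:=c(V_p)$ be the set of items currently held in $V_p$, so that $|\Pi|=|V_p|$ and $\varnothing\neq\Pi\subseteq M\setminus X\subsetneq M$. I claim some $j^*\in\Pi$ satisfies $|N_{j^*}\cap R|\ge 2$. Indeed, suppose not. If some $j'\in\Pi$ had an acceptor outside $V_p$, then, because $G[N_{j'}]$ is connected and $G$ is a tree, the unique path from the holder of $j'$ (which lies in $V_p$) to that outside acceptor would pass through $p$ and then reach a second vertex of $R$, forcing $|N_{j'}\cap R|\ge 2$; hence every $j'\in\Pi$ would satisfy $N_{j'}\subseteq V_p$. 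This would give $|N_\Pi|\le|V_p|=|\Pi|$, contradicting the absence of a proper stable subset of $M$ (which yields $|N_\Pi|\ge|\Pi|+1$). So a good $j^*\in\Pi$ exists, and the same connectivity argument shows $p\in N_{j^*}$.

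It then remains to reconfigure $c$, within $V_p$, to an assignment $c^*$ that places $j^*$ at $p$; this makes $j^*$ the new extra item, so $c^*(R)=X\cup\{j^*\}$ with $|N_{j^*}\cap R|\ge 2$, as required. This transport step is the \emph{main obstacle}. The idea is to slide $j^*$ toward $p$ along the path joining $p$ to the current holder of $j^*$; since this path lies in the connected set $N_{j^*}$, every agent on it accepts $j^*$, so each individual move is legal on one side. The difficulty is that the item displaced at each step need not be acceptable to the agent receiving it, so the slide can be blocked. I expect to handle this by an inner induction on $|V_p|$ (equivalently, on the distance of $j^*$ from $p$): before performing a blocked swap I would first reconfigure the relevant sub-branch so as to bring an acceptable item into the blocking position, using Hall's marriage theorem to guarantee that such a local assignment exists and the connectivity of the sets $N_{j}$ to guarantee it is reachable. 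Proving that this clearing procedure always terminates with $j^*$ resting at $p$ is the crux; by comparison the easy case and the existence argument above are routine.
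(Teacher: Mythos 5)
Your easy case, your identification of the subsystem $V_p$ (which is exactly the component $Q$ of $G-E(R)$ containing the extra-item holder in the paper's proof, with $p$ playing the role of the cut vertex $q$), and your existence argument for a good item $j^*\in c(V_p)$ with $p\in N_{j^*}$ and $|N_{j^*}\cap R|\ge 2$ are all correct and essentially match the paper. However, the transport step---reconfiguring $c$ so that $j^*$ actually arrives at $p$---is where the entire content of the lemma lies, and you leave it as a plan (``I expect to handle this by an inner induction\dots'') rather than a proof. The sliding-and-clearing scheme does not work as stated: to ``bring an acceptable item into the blocking position'' you must know that the required local rearrangement is \emph{reachable}, which is the very kind of statement being proven. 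Hall's marriage theorem only gives \emph{existence} of the rearranged assignment, and connectivity of the sets $N_j$ alone does not give reachability; to invoke the induction hypothesis of Theorem~\ref{thm:chartree} on a sub-branch you must verify that the sub-instance has \emph{no proper stable subset}, and an arbitrary sub-branch of $V_p$ (indeed $V_p$ itself, since $c(V_p)$ may be tight) can fail this, in which case reconfiguration inside it can be genuinely obstructed.

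The paper closes this gap with one device your proposal is missing: instead of sliding, choose $Y\subseteq c(Q)$ inclusionwise \emph{minimal} with $|N_Y\cap Q|=|Y|$ (such $Y$ exists because $Y=c(Q)$ qualifies). Minimality gives, for every nonempty $S\subsetneq Y$, the strict inequality $|N_S\cap Q|\ge |S|+1$, so the sub-instance on $c^{-1}(Y)=N_Y\cap Q$ has no proper stable subset and $G[N_Y\cap Q]$ is connected; the good item $j^*$ is then located \emph{inside} $Y$ (by the same cut-vertex/connectivity argument you use, exploiting $|N_Y|>|Y|$), a target assignment $c^*$ with $c^*(q)=j^*$ exists by Hall's theorem applied within $c^{-1}(Y)$, and the outer induction hypothesis of Theorem~\ref{thm:chartree} applied to this strictly smaller instance yields the reconfiguration from $c$ to $c^*$ in one stroke---no item-by-item transport is ever needed. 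Without the minimal tight set (or an equivalent device guaranteeing the no-proper-stable-subset condition on the region where you reconfigure), your clearing procedure has no feasibility or termination guarantee, so the proposal has a genuine gap at its self-declared crux.
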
 

\begin{proof}
By Observation~\ref{obs:01}, there exists a unique vertex $q$ in $R$ such that $c(q) \in M \setminus X$. 
Let $Q \subseteq N$ be the vertex set of the connected component of $G - E(R)$ containing $q$, where $E(R)$ is the set of edges with both endpoints in $R$. 
Since $G$ is a tree, we obtain the following: 
\begin{enumerate}[(C1)]
    \item $q$ is a cut vertex of $G$ separating $Q\setminus \{q\}$ and $N \setminus Q$;
    \item Any vertex in $N \setminus Q$ that is adjacent to $q$ is contained in $R$. 
\end{enumerate}
Define $Y \subseteq c(Q)$ as an inclusionwise minimal nonempty set of items such that  $|N_Y \cap Q| = |Y|$. 
Note that such $Y$ exists, because $Y = c(Q)$ satisfies that $|N_Y \cap Q| = |Y|$. 
We observe a few properties of $Y$. 
First, $G[N_Y \cap Q]$ is connected by the minimality of $Y$.  
Second, by $Y \subseteq c(Q)$ and $|N_Y \cap Q| = |Y|$, it holds that $c^{-1}(Y) = N_Y \cap Q$. 
Third, since $Y$ is not a proper stable subset, we obtain $|N_Y| > |Y| = |N_Y \cap Q|$, and hence 
there exists a vertex in $N_Y \setminus Q$. 
Then, there exists an item $j^* \in Y$ with $N_{j^*} \setminus Q \not= \emptyset$. We also see that $N_{j^*} \cap Q \not= \emptyset$ as $c^{-1}(j^*) \in Q$. Since $G[N_{j^*}]$ is connected and $N_{j^*}$ intersects both $Q$ and $N \setminus Q$, (C1) shows that $N_{j^*}$ contains $q$. Furthermore, $N_{j^*}$ contains a vertex $q' \in N \setminus Q$ that is adjacent to $q$. Since $q' \in R$ by (C2), we obtain $|N_{j^*} \cap R| \ge |\{q, q'\}| = 2$.

We next claim 
that there exists a bijection $c^* \colon c^{-1}(Y) \to Y$ such that 
$c^*(i) \in M_i$ for $i \in c^{-1}(Y)$ and $c^*(q) = j^*$. 
For any nonempty subset $S \subseteq Y \setminus \{j^*\}$, we obtain 
$|N_S \cap c^{-1}(Y)| = |N_S \cap Q| \ge |S|+1$ by the minimality of $Y$.
This shows that $|S| \le |(N_S \cap c^{-1}(Y)) \setminus \{ q \}|$ holds for all $S \subseteq Y \setminus \{j^*\}$. 
Therefore, a desired bijection $c^*$ exists by Hall's marriage theorem. 
Note that $c^*$ can be naturally extended to a bijection from $N$ to $M$
by defining $c^*(i) = c(i)$ for $i \in N \setminus c^{-1}(Y)$. 
Then, it holds that $c^*(R) = X \cup \{j^*\}$. 

Finally, we show that $c$ can be reconfigured to $c^*$. 
To see this, it suffices to consider $G[c^{-1}(Y)]$. 
For any nonempty subset $S \subsetneq Y$, we obtain 
$|N_S \cap c^{-1}(Y)|  = |N_S \cap Q| \ge |S|+1$ by the minimality of $Y$.
This means that there is no proper stable subset if we restrict the instance to $G[c^{-1}(Y)]$. 
We also see that $G[N_{j'} \cap c^{-1}(Y)]$ is connected for each $j' \in Y$, because $G[N_{j'}]$ and $G[c^{-1}(Y)]=G[N_Y \cap Q]$ are connected and $G$ is a tree. 
Therefore, by the induction hypothesis, any pair of assignments in $G[c^{-1}(Y)]$ can be reconfigured to each other. 
This shows that $c$ can be reconfigured to $c^*$. 
\end{proof}

By applying Lemma~\ref{clm:05} 
in which $c=b$, 
there exist an assignment $b^*\colon N \to M$ and 
an item $j^* \in M \setminus X$ such that $b^*(R) = X \cup \{j^*\}$, $|N_{j^*} \cap R| \ge 2$, and 
$b$ can be reconfigured to $b^*$. 
Conversely, it is obvious that $b^*$ can be reconfigured to $b$. 

Let $G^\circ$ be the graph obtained from $G$ by shrinking $R$ to a single vertex $r$, and let $N^\circ$ be its vertex set, i.e., $N^\circ = (N \setminus R) \cup \{r\}$. 
Let $M^\circ = M \setminus X$. 
For $j \in M^\circ$, define $N^\circ_j$ as follows: 
$$
N^\circ_j=
\begin{cases}
N_j \cup \{r\}& \mbox{if $N_j \cap R \not= \emptyset$,} \\
N_j & \mbox{otherwise.} 
\end{cases}
$$
We can easily see that $G^\circ[N^\circ_j]$ is connected for each $j \in M^\circ$ as $G[N_j]$ is connected. 
For assignments $a$ and $b^*$ in $G$, let $a^\circ$ and $b^\circ$ be the corresponding assignments in $G^\circ$, 
which are naturally defined by Observation~\ref{obs:01}.

\begin{lemma} 
\label{clm:06}
$(N^\circ, M^\circ, \NN^\circ, G^\circ, a^\circ, b^\circ)$ is a yes-instance. 
\end{lemma}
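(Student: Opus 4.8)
The plan is to verify that the contracted instance $(N^\circ, M^\circ, \NN^\circ, G^\circ, a^\circ, b^\circ)$ satisfies every hypothesis of Theorem~\ref{thm:chartree} and has strictly fewer agents, so that the induction hypothesis applies and tells us that \emph{every} assignment in $G^\circ$ can be reconfigured to any other; in particular $a^\circ$ can be reconfigured to $b^\circ$. Concretely, I would check four things: (i) $G^\circ$ is a tree; (ii) $G^\circ[N^\circ_j]$ is connected for every $j \in M^\circ$; (iii) the contracted instance has no proper stable subset; and (iv) $|N^\circ| = |M^\circ|$ and $|N^\circ| < |N|$, together with the fact that $a^\circ, b^\circ$ are genuine assignments.

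For (i), since $G[R]$ is connected by Lemma~\ref{clm:01} and $G$ is a tree, contracting $R$ deletes exactly the $|R|-1$ internal edges of the subtree $G[R]$ and creates no new cycle; moreover no multi-edge arises, because two vertices of $R$ with a common neighbor outside $R$, together with a path inside $G[R]$, would form a cycle in $G$. Hence $G^\circ$ is a tree. Item (ii) is exactly the connectivity remark stated just before the lemma. For (iv), $|N^\circ| = |N| - |R| + 1 = |N| - |X|$ and $|M^\circ| = |M| - |X| = |N| - |X|$, so $|N^\circ| = |M^\circ|$, and since $X \neq \emptyset$ we get $|N^\circ| < |N|$. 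That $a^\circ$ and $b^\circ$ are valid assignments is routine: items of $X$ are acceptable only to agents of $R = N_X$, so every agent outside $R$ receives an item of $M^\circ$, while by Observation~\ref{obs:01} the unique item of $M \setminus X$ used inside $R$ is acceptable to the contracted vertex $r$ by definition of $N^\circ$.

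The main obstacle is (iii). Fix a nonempty $Y \subsetneq M^\circ$ and write $N^\circ_Y = \bigcup_{j \in Y} N^\circ_j$. I would split on whether $N_Y$ meets $R$. If $N_Y \cap R = \emptyset$, then $N^\circ_j = N_j$ for all $j \in Y$, so $N^\circ_Y = N_Y$ and $|N^\circ_Y| = |N_Y| \ge |Y|+1$, because $Y$ is a nonempty proper subset of $M$ (as $X \neq \emptyset$) and $M$ has no proper stable subset. If instead $N_Y \cap R \neq \emptyset$, then $r \in N^\circ_Y$ and $N^\circ_Y = (N_Y \setminus R) \cup \{r\}$, so $|N^\circ_Y| = |N_Y \setminus R| + 1$; it thus suffices to prove $|N_Y \setminus R| \ge |Y|$.

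The key trick here is to set $Z := X \cup Y$. Since $X$ and $Y$ are disjoint and $Y \subsetneq M \setminus X$ is proper, $Z$ is a nonempty proper subset of $M$, whence $|N_Z| \ge |Z|+1 = |X| + |Y| + 1$. On the other hand $N_Z = N_X \cup N_Y = R \cup N_Y$, so $|N_Z| = |R| + |N_Y \setminus R| = |X| + 1 + |N_Y \setminus R|$. Comparing the two estimates yields $|N_Y \setminus R| \ge |Y|$, hence $|N^\circ_Y| \ge |Y|+1$ in this case as well. This establishes (iii); applying the induction hypothesis to the contracted instance then completes the proof. I expect this stable-subset verification, and in particular the passage to $Z = X \cup Y$, to be the only nonroutine step.
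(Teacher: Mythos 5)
Your proposal is correct and follows essentially the same route as the paper: the heart of both arguments is the verification that the contracted instance has no proper stable subset, and your key step of passing to $Z = X \cup Y$ and using $N_Z = R \cup N_Y$ with $|R| = |X|+1$ is exactly the paper's computation, merely phrased as a direct bound rather than a contradiction. The additional checks (i), (ii), (iv) that you spell out are treated as routine or handled in the surrounding text by the paper, so your write-up is simply a more self-contained version of the same proof.
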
 

\begin{proof}
We show that this instance has no proper stable subset of items.
Assume to the contrary that $Y \subsetneq M^\circ$ is a proper stable subset, that is, $|N^\circ_Y| = |Y|$. 
If $r \not\in N^\circ_Y$, then $|N_Y| = |N^\circ_Y| = |Y|$, and hence $Y$ is a proper stable subset in the original instance, which is a contradiction. 
Otherwise, $|N_{Y\cup X}| = |(N^\circ_Y \setminus \{r\}) \cup R| = |N^\circ_Y| - 1 + |R| = |Y| + |X|$, 
and hence $Y \cup X$ is a proper stable subset in the original instance, which is a contradiction. 
Therefore, $(N^\circ, M^\circ, \NN^\circ, G^\circ, a^\circ, b^\circ)$ has no proper stable subset of items, which shows that 
it is a yes-instance by the induction hypothesis. 
\end{proof}

We next show that a reconfiguration in $G^\circ$ can be converted to one in $G$ in the following sense. 

\begin{lemma} 
\label{clm:10}
Let $c^\circ_1, c^\circ_2 \colon N^\circ \to M^\circ$ be assignments in $G^\circ$ such that $c^\circ_1 \rightarrow c^\circ_2$, and 
let $c_1 \colon N\to M$ be an assignment in $G$ that corresponds to $c^\circ_1$. 
Then, there exists an assignment $c_2 \colon N\to M$ in $G$ 
such that $c_2$ corresponds to $c^\circ_2$ and $c_1$ can be reconfigured to $c_2$ in $G$.
\end{lemma}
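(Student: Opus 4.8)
The plan is to translate a single swap $c^\circ_1 \to c^\circ_2$ in the contracted graph $G^\circ$ into a reconfiguration sequence in $G$. By definition of $\to$, there exist distinct agents $p, p' \in N^\circ$ joined by an edge of $G^\circ$ whose items are exchanged, while all other agents keep their items. I would split into two cases depending on whether the contracted vertex $r$ participates in this swap.

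If $r \notin \{p, p'\}$, then both $p$ and $p'$ lie in $N \setminus R$, the edge $\{p,p'\}$ is a genuine edge of $G$, and the swapped items $c^\circ_1(p), c^\circ_1(p')$ both lie in $M^\circ = M \setminus X$. Since $c_1$ corresponds to $c^\circ_1$, the agents in $R$ hold exactly the items $X \cup \{j\}$ for the appropriate $j \in M^\circ$, and $p, p'$ hold these same two items outside $R$ in $c_1$. Thus the identical single swap is legal in $G$, producing the desired $c_2$ that corresponds to $c^\circ_2$.

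The interesting case is $r \in \{p, p'\}$, say $p = r$ and $p' \in N \setminus R$ adjacent to $r$ in $G^\circ$; by the construction of $G^\circ$, this means $p'$ is adjacent in $G$ to some vertex of $R$, and since $G$ is a tree there is a unique such vertex, call it $i_2 \in R$, with $\{i_2, p'\} \in E$. In $G^\circ$ the swap moves an item $j^\circ := c^\circ_1(r) \in M^\circ$ out of $R$ and brings $c^\circ_1(p')$ into $R$; note $i_2 \in N_{c^\circ_1(p')}$ since this incoming item is accepted across the edge $\{i_2, p'\}$. The plan is first to reconfigure $c_1$ \emph{inside} $G[R]$ so that the outgoing item $j^\circ$ sits on the agent $i_2$ at the boundary, then perform the boundary swap between $i_2$ and $p'$. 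Concretely, I would apply Lemma~\ref{clm:04}: letting $i_1 := c_1^{-1}(j^\circ) \in N_{j^\circ} \cap R$ be the current holder of $j^\circ$ and taking $i_2$ as above (both in $N_{j^\circ} \cap R$, using that $i_2$ is also adjacent to the vertex $r$-side so $j^\circ$ is acceptable to $i_2$), the lemma yields an assignment $a''$ in $R$ with $a''(i_2) = j^\circ$ reachable from the restriction of $c_1$ to $R$; by Lemma~\ref{clm:03}'s induction-based reconfiguration this rearrangement is realizable as a sequence of swaps within $G[R]$. After this, $i_2$ holds $j^\circ$, its neighbor $p'$ holds $c^\circ_1(p') \in M_{i_2}$, the edge $\{i_2,p'\}$ is present, and both accept the exchange, so a single swap across it completes the move. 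The resulting assignment $c_2$ then corresponds to $c^\circ_2$.

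The main obstacle is verifying the acceptability conditions so that Lemma~\ref{clm:04} actually applies, namely that both $i_1$ and $i_2$ belong to $N_{j^\circ} \cap R$ with $|N_{j^\circ}\cap R| \ge 2$. That $i_1 \in N_{j^\circ}$ is immediate since $i_1$ currently holds $j^\circ$; the delicate point is that the specific boundary agent $i_2$ adjacent to $p'$ also accepts $j^\circ$, which should follow from connectivity of $G[N_{j^\circ}]$ together with the tree structure and the fact that $j^\circ$ was acceptable to $r$ in $G^\circ$ (i.e., $N_{j^\circ}\cap R \ne \emptyset$), so the unique path from the accepting $R$-vertex to $p'$ passes through $i_2$. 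I would spell out this path-in-a-tree argument carefully, since it is the crux that lets the internal rearrangement deliver $j^\circ$ to the correct boundary vertex, and once that is secured, the remaining verifications that $c_2$ corresponds to $c^\circ_2$ and that the swap is legal are routine.
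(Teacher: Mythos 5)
Your proposal is correct and follows essentially the same route as the paper's proof: the same case split on whether the contracted vertex $r$ participates in the swap, the same use of Lemma~\ref{clm:04} to ferry the outgoing item $j^\circ$ to the unique boundary vertex of $R$ adjacent to $p'$, and the same tree-plus-connectivity argument (connectivity of $G[N_{j^\circ}]$, entry point into $R$) to show that boundary vertex accepts $j^\circ$, followed by the single boundary swap. One small repair: your inline justification that $i_2 \in N_{c^\circ_1(p')}$ (``accepted across the edge'') is not valid as stated; the acceptability of the \emph{incoming} item at $i_2$ needs the very same connectivity/tree-path argument you correctly invoke for the outgoing item $j^\circ$, using $p' \in N_{c^\circ_1(p')}$ and $N_{c^\circ_1(p')} \cap R \neq \emptyset$.
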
 
\begin{proof}
Suppose that $c^\circ_1(i) = c^\circ_2(i^{\prime})$, 
$c^\circ_1(i^{\prime}) = c^\circ_2(i)$, and 
$\{i,i^{\prime}\} \in E(G^\circ)$.

We first consider the case when $r \not\in \{i,i^{\prime}\}$. 
Define $c_2 \colon N\to M$ as 
$c_2(i) = c_1(i^{\prime})$, 
$c_2(i^{\prime}) = c_1(i)$, and 
$c_2(k) = c_1(k)$ for $k \in N \setminus \{i,i^{\prime}\}$.  
Then, $c_2$ corresponds to $c^\circ_2$ and $c_1 \rightarrow c_2$.

We next consider the case when $r \in \{i,i^{\prime}\}$.
By symmetry, we may assume that  $r=i'$. 
Let $j = c^\circ_1(r)$
and let $q \in R$ be the unique vertex that is adjacent to $i$ in $G$. 
Since $c^\circ_1(r) = c^\circ_2(i) = j$ implies that $N_j \cap R \not= \emptyset$ and $i \in N_j$, it holds that $q \in N_j$. 
By using Lemma~\ref{clm:04} 
in which $a'$ is the restriction of $c_1$ to $R$ and $i_2=q$, 
we see that there exists an assignment $c_3\colon N\to M$ in $G$ 
such that 
$c_3(q)=j$, 
$c_3(k) = c_1(k)$ for $k \in N \setminus R$, and 
$c_1$ can be reconfigured to $c_3$. 
Define $c_2 \colon N\to M$ as 
$c_2(i) = c_3(q)$, 
$c_2(q) = c_3(i)$, and 
$c_2(k) = c_3(k)$ for $k \in N \setminus \{i,q\}$.  
Then, $c_2$ corresponds to $c^\circ_2$ and $c_3 \rightarrow c_2$, 
which shows that $c_2$ satisfies the conditions in the lemma.
\end{proof}

We are now ready to show that $(N, M, \NN, G, a, b)$ is a yes-instance. 
Since Lemma~\ref{clm:06} 
shows that $(N^\circ, M^\circ, \NN^\circ, G^\circ, a^\circ, b^\circ)$ is a yes-instance, there exists a reconfiguration sequence from $a^\circ$ to $b^\circ$. 
By using Lemma~\ref{clm:10}, 
this sequence can be converted to a reconfiguration sequence
from $a$ to some assignment $b'$ in $G$ such that 
$b'(i) = b^\circ (i) = b^*(i)$ for $i \in N \setminus R$ and $b'(R) = X \cup \{b^\circ (r)\} = X \cup \{j^*\}$. 
Furthermore, since $|N_{j^*} \cap R| \ge 2$, Lemma~\ref{clm:03} 
shows that $b'$ can be reconfigured to $b^*$. 
Therefore, there exists a reconfiguration sequence
$a \to \dots \to b' \to \dots \to  b^* \to \dots \to b$, and hence 
$(N, M, \NN, G, a, b)$ is a yes-instance. 

\subsection{Case 2}

In this subsection, we consider the case when $N_X=N$ or $|N_X| \ge |X|+2$ holds 
for any nonempty subset $X \subseteq M$. 
We begin with the following lemmas. 

\begin{lemma} 
\label{clm:08}
If $a(\ell) = b(\ell)$ for some leaf $\ell$, then $a$ can be reconfigured to $b$. 
\end{lemma}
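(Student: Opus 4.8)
The plan is to reduce the problem to a smaller instance by deleting the leaf $\ell$ once its assigned item is fixed. Since $a(\ell) = b(\ell) =: j$, the agent $\ell$ already holds the correct item, so intuitively we should be able to reconfigure $a$ to $b$ without ever disturbing $\ell$. The obstacle, of course, is that a reconfiguration sequence from $a$ to $b$ in the full instance might \emph{need} to move $j$ through $\ell$ as an intermediate step; but because $\ell$ is a leaf, $\ell$ has a unique neighbor $p$ in $G$, and any exchange involving $\ell$ must be an exchange with $p$. I would first argue that it suffices to find a reconfiguration that keeps $\ell$ fixed throughout, and then set up the deletion.

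Concretely, I would form the instance $G' := G - \ell$ on agent set $N' := N \setminus \{\ell\}$ and item set $M' := M \setminus \{j\}$, with the families restricted accordingly: $N'_{j'} := N_{j'} \setminus \{\ell\}$ for each $j' \in M'$. Let $a', b'$ be the restrictions of $a, b$ to $N'$ (these are assignments in $G'$ since $a(\ell) = b(\ell) = j$, so both $a$ and $b$ map $N'$ bijectively onto $M'$). The key verifications are that $G'$ is still a tree (deleting a leaf from a tree yields a tree), that $G'[N'_{j'}]$ is connected for every $j' \in M'$, and — the crucial step — that the reduced instance has no proper stable subset of items. Once these hold, the induction hypothesis (applied since $|N'| < |N|$) gives a reconfiguration from $a'$ to $b'$ in $G'$, and since no exchange in $G'$ involves $\ell$, the same sequence lifts verbatim to a reconfiguration from $a$ to $b$ in $G$.

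The main obstacle will be confirming the absence of a proper stable subset in the reduced instance, since deleting the leaf $\ell$ removes it from every $N_{j'}$ with $\ell \in N_{j'}$ and could in principle create a tight set. Here I would lean on the Case~2 hypothesis that $N_X = N$ or $|N_X| \ge |X| + 2$ for every nonempty $X \subseteq M$. For a candidate proper stable set $Y \subsetneq M'$ of the reduced instance, one has $N'_Y = N_Y \setminus \{\ell\}$, so $|N'_Y| \ge |N_Y| - 1$. If $\ell \notin N_Y$, then $|N'_Y| = |N_Y| \ge |Y| + 1 > |Y|$ (as $M$ has no proper stable subset and $Y \subsetneq M \subseteq$ is proper), so $Y$ is not stable in the reduced instance. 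If $\ell \in N_Y$, then $N_Y \ne N$ is impossible to rule out directly by the no-proper-stable-subset bound alone, so I would invoke the stronger Case~2 bound: either $N_Y = N$, in which case $|N'_Y| = |N| - 1 = |N'| > |Y|$ since $Y \subsetneq M'$ forces $|Y| < |M'| = |N'|$; or $|N_Y| \ge |Y| + 2$, whence $|N'_Y| \ge |N_Y| - 1 \ge |Y| + 1 > |Y|$. In every case $|N'_Y| > |Y|$, so the reduced instance has no proper stable subset, completing the reduction. The only subtlety to check carefully is that $Y = M'$ is excluded (we only need to rule out \emph{proper} stable subsets, and $Y = M'$ corresponds to the whole reduced item set, which is allowed to be tight), which is automatic from the $Y \subsetneq M'$ requirement.
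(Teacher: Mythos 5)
Your proof is correct and follows essentially the same route as the paper: delete the leaf $\ell$ and the item $a(\ell)$, use the Case~2 hypothesis ($N_Y = N$ or $|N_Y| \ge |Y|+2$) to show the reduced instance has no proper stable subset, and invoke the induction hypothesis, lifting the resulting sequence back to $G$. The paper merely compresses your two-case analysis into the single bound $|N_Y| \ge \min(|N|, |Y|+2) = |Y|+2$ for nonempty $Y \subsetneq M'$.
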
 
\begin{proof}
Consider the instance $(N', M', \NN', G', a, b)$ obtained from $(N, M, \NN, G, a, b)$ by removing $\ell$ and $a(\ell)$. 
That is, $G' = G - \ell$, $N' = N \setminus \{\ell \}$, $M' = M \setminus \{a(\ell)\}$, $N'_j = N_j \setminus \{\ell\}$ for $j \in M'$, and 
the domains of $a$ and $b$ are restricted to $N'$. 
Then, for any nonempty subset $Y \subsetneq M'$, we obtain $|N'_Y| = |N_Y \setminus \{\ell\}| \ge |N_Y| - 1 \ge (|Y| + 2) - 1 \ge |Y| + 1$, 
where we note that  $ |N_Y| \ge \min (|N|, |Y| + 2) = |Y| + 2$ by the assumption in this subsection. 
Therefore, the obtained instance has no proper stable subset, and hence the restriction of $a$ can be reconfigured to that of $b$ in $G'$ by the induction hypothesis. 
Since $a(\ell) = b(\ell)$, this shows that $a$ can be reconfigured to $b$ in $G$. 
\end{proof}

\begin{lemma} 
\label{clm:11}
If there exist distinct leaves $\ell$ and $\ell'$ such that $a(\ell') \not= b(\ell)$, then
$a$ can be reconfigured to $b$. 
\end{lemma}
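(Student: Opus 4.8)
The plan is to reduce to Lemma~\ref{clm:08} by reconfiguring $a$ into an assignment that already agrees with $b$ on the leaf $\ell$, after which Lemma~\ref{clm:08} finishes the job. Set $j := b(\ell)$, so $\ell \in N_j$, and note that the hypothesis $a(\ell') \ne b(\ell)$ says precisely $j \ne a(\ell')$. The conceptual crux is to delete the \emph{other} leaf $\ell'$ (not $\ell$), keeping it fixed at its current item $a(\ell')$, and to do all the work inside $G - \ell'$; this is what frees $\ell$ to receive an arbitrary acceptable item via the induction hypothesis.

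First I would form the reduced instance obtained by removing $\ell'$ together with the item $a(\ell')$, i.e., $N' := N \setminus \{\ell'\}$, $M' := M \setminus \{a(\ell')\}$, $N'_{j'} := N_{j'} \setminus \{\ell'\}$ for $j' \in M'$, and $G' := G - \ell'$. Exactly as in the proof of Lemma~\ref{clm:08}, the Case~2 assumption gives $|N_Y| \ge |Y| + 2$ for every nonempty $Y \subsetneq M'$, so $|N'_Y| \ge |N_Y| - 1 \ge |Y| + 1$ and the reduced instance has no proper stable subset; moreover $G'[N'_{j'}]$ stays connected for each $j' \in M'$ since $\ell'$ is a leaf of the subtree $G[N_{j'}]$. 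As $|N'| < |N|$, the induction hypothesis applies, and every assignment of the reduced instance can be reconfigured to every other.

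Next I would produce a target assignment $a^*$ of the reduced instance with $a^*(\ell) = j$. Because $j \ne a(\ell')$ we have $j \in M'$, and because $\ell \ne \ell'$ and $\ell \in N_j$ we have $\ell \in N'_j$; Hall's marriage theorem (applied to $M' \setminus \{j\}$, where $|N'_Y| \ge |Y|+1$ yields $|N'_Y \setminus \{\ell\}| \ge |Y|$) then gives a full assignment $a^*$ of the reduced instance with $a^*(\ell) = j$. By the previous paragraph, $a|_{N'}$ can be reconfigured to $a^*$ inside $G'$; lifting this sequence to $G$ (each move uses an edge of $G' \subseteq G$ and leaves $\ell'$ untouched) shows that $a$ can be reconfigured to the assignment $\tilde a$ defined by $\tilde a|_{N'} = a^*$ and $\tilde a(\ell') = a(\ell')$. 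In particular $\tilde a(\ell) = j = b(\ell)$.

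Finally, since $\ell$ is a leaf and $\tilde a(\ell) = b(\ell)$, Lemma~\ref{clm:08} shows that $\tilde a$ can be reconfigured to $b$, whence $a \to \dots \to \tilde a \to \dots \to b$, so $a$ can be reconfigured to $b$. The hypothesis $a(\ell') \ne b(\ell)$ enters exactly once, to guarantee that the desired item $j$ survives in the reduced instance; the only routine points to verify are the no-proper-stable-subset bound for the reduced instance (identical to Lemma~\ref{clm:08}) and the Hall condition that lets us force $\ell \mapsto j$, so I do not expect a genuine obstacle beyond bookkeeping.
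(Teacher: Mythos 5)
Your proof is correct and takes essentially the same route as the paper: both arguments construct an intermediate assignment sending $\ell' \mapsto a(\ell')$ and $\ell \mapsto b(\ell)$ (via Hall's marriage theorem together with the Case-2 surplus $|N_S|\ge |S|+2$, which is exactly where the hypothesis $a(\ell')\ne b(\ell)$ is used), and then finish by leaf-agreement. The only cosmetic difference is that the paper invokes Lemma~\ref{clm:08} twice, once for each leg, whereas you inline its leaf-deletion/induction argument for the first leg (fixing $\ell'$) and cite Lemma~\ref{clm:08} only for the second.
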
 

\begin{proof}
We first show that there exists an assignment $c\colon N \to M$ such that $c(\ell') = a(\ell')$ and $c(\ell) = b(\ell)$. 
For any nonempty subset $S \subseteq M \setminus \{a(\ell'), b(\ell)\}$, we obtain 
$|N_S \setminus \{ \ell, \ell'\}| \ge |N_S| - 2 \ge (|S|+2) - 2 = |S|$, 
where we note that  $|N_S| \ge \min (|N|, |S| + 2) = |S| + 2$ by the assumption in this subsection. 
Therefore, a desired assignment $c$ exists by Hall's marriage theorem.  

Since $a(\ell')=c(\ell')$, Lemma~\ref{clm:08} 
shows that $a$ can be reconfigured to $c$. 
Similarly, since $c(\ell) = b(\ell)$, $c$ can be reconfigured to $b$ by 
Lemma~\ref{clm:08} 
again. 
Therefore, $a$ can be reconfigured to $b$, which completes the proof. 
\end{proof}

We are now ready to show that $a$ can be reconfigured to $b$. 
If $G$ has at least three leaves, then 
there exist distinct leaves $\ell$ and $\ell'$ such that $a(\ell') \not= b(\ell)$, 
and hence $a$ can be reconfigured to $b$ by 
Lemma~\ref{clm:11}. 

Thus, the remaining case is when $G$ is a path with exactly two leaves $\ell$ and $\ell'$. 
We may assume that $a(\ell) = b(\ell')$ and $a(\ell')=b(\ell)$, 
since otherwise $a$ can be reconfigured to $b$ by 
Lemma~\ref{clm:11}. 
We may also assume that $G$ has at least three vertices, since otherwise 
the 
lemma 
is obvious. 
Let $q$ be the unique vertex adjacent to $\ell$. 

We now show that there exists an assignment $c\colon N \to M$ such that $c(\ell) = a(\ell)$ and $c(q) = a(\ell')$. 
Note that $q \in N_{a(\ell')}$, because $a(\ell')=b(\ell)$ and $G$ is a path. 
For any nonempty subset $S \subseteq M \setminus \{a(\ell), a(\ell')\}$, we obtain 
$|N_S \setminus \{ q, \ell\}| \ge |N_S| - 2 \ge (|S|+2) - 2 = |S|$ by the assumption in this subsection. 
Therefore, a desired assignment $c$ exists by Hall's marriage theorem.  

Since $a(\ell) = c(\ell)$, $a$ can be reconfigured to $c$ by 
Lemma~\ref{clm:08}. 
Furthermore, since $c(\ell') \not= c(q) = a(\ell') = b(\ell)$, $c$ can be reconfigured to $b$ by 
Lemma~\ref{clm:11}. 
By combining them, we have that $a$ can be reconfigured to $b$, which completes the proof.

\section{Trees: Algorithm}
\label{sec:tree:alg}

Theorem \ref{thm:chartree} leads to the following polynomial-time algorithm to determine whether two given assignments can be reconfigured to each other.

\begin{theorem}
\label{thm:tree-algo}
We can determine in polynomial time whether
for a given instance $(N, M, \NN, G, a, b)$, $a$ can be reconfigured to $b$, when $G$ is a tree.
\end{theorem}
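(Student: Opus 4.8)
The plan is to use Theorem~\ref{thm:chartree} as the engine and to reduce an arbitrary instance, by peeling off proper stable subsets, to subinstances in which that theorem applies verbatim. The first step is a preprocessing that enforces the standing hypothesis ``$G[N_j]$ connected''. By the observation in Section~2 that the positions $a_0^{-1}(j),a_1^{-1}(j),\dots$ of an item $j$ along any reconfiguration sequence all lie in one connected component of $G[N_j]$, if $a^{-1}(j)$ and $b^{-1}(j)$ lie in distinct components of $G[N_j]$ for some $j$, then the instance is a no-instance. Otherwise I restrict each $N_j$ to the component containing $a^{-1}(j)$ (which is the one containing $b^{-1}(j)$); this changes neither the answer nor the validity of $a$ and $b$, and afterwards $G[N_j]$ is connected for every $j$. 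Each test is a connectivity check, so this step is polynomial.

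The heart of the reduction is that a stable set decomposes the problem into two independent parts. If $X$ is stable, then $|N_X|=|X|$ forces $a^{-1}(X)=N_X$ for \emph{every} assignment $a$, so the agents of $N_X$ always hold exactly the items of $X$. Hence an agent of $N_X$ and an agent outside $N_X$ can never perform a valid exchange: the item held by the former lies in $X$ and is unacceptable to the latter. Therefore every reconfiguration sequence splits into exchanges inside $(N_X,X)$ and exchanges inside $(N\setminus N_X,\,M\setminus X)$, and these subinstances, with graphs $G[N_X]$ and $G-N_X$, are fully independent: $a$ can be reconfigured to $b$ if and only if both $a|_{N_X}$ reconfigures to $b|_{N_X}$ and $a|_{N\setminus N_X}$ reconfigures to $b|_{N\setminus N_X}$ (both restrictions are genuine assignments since $a(N_X)=X=b(N_X)$).

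Building on this, the algorithm searches for a proper stable subset. The inclusion-minimal stable subsets can be found in polynomial time; concretely, form the digraph on $M$ with an arc $j\to a(i)$ for every $i\in N_j$: its closed vertex sets are exactly the stable subsets, its sink strongly connected components are exactly the inclusion-minimal stable subsets, and a proper stable subset exists unless this digraph is strongly connected. If none exists, the hypotheses of Theorem~\ref{thm:chartree} hold after preprocessing and the instance is a yes-instance. Otherwise I take a minimal proper stable subset $X$. By minimality, the subinstance on $(N_X,X)$ has no proper stable subset, and $G[N_X]$ is connected (a disconnection would split $X$ into smaller stable sets, contradicting minimality), so Theorem~\ref{thm:chartree} already guarantees that the $X$-part is reconfigurable and needs no check; by the decomposition it then suffices to recurse on $(N\setminus N_X,\,M\setminus X,\,G-N_X)$. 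Since each recursive call removes at least one item, there are at most $|M|$ levels, each polynomial, giving an overall polynomial running time.

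The step I expect to be the crux is the interaction between the preprocessing and the recursion. Deleting $N_X$ can disconnect some $G[N_j]$ with $j\notin X$, so the connectivity test must be \emph{repeated} at every recursive level, and the real content is its soundness: once $X$ has been peeled, an item $j\notin X$ can never occupy an agent of $N_X$ (those agents are locked on items of $X$ in every intermediate assignment), so $j$ genuinely moves only within $G[N_j\setminus N_X]$ and a failed test there is a true obstruction. Granting this, the only source of a no-answer is a failed connectivity test at some level, while every instance surviving all tests is resolved as a yes-instance by Theorem~\ref{thm:chartree}, which establishes correctness.
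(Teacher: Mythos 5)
Your proposal is correct in substance, but it takes a genuinely different route from the paper, and the differences are worth recording. To find a proper stable subset, the paper minimizes, for each item $j$, the submodular function $f_j(X)=|N_{X\cup\{j\}}|-|X\cup\{j\}|$ and takes a \emph{minimum-size} stable set (Lemma~\ref{lem:propstab-algo}); you instead take an \emph{inclusion-minimal} one, obtained as a sink strongly connected component of the exchange digraph with arcs $j\to a(i)$ for $i\in N_j$. Your characterization is valid: $X$ is closed exactly when $N_X\subseteq a^{-1}(X)$, and since $a^{-1}(X)\subseteq N_X$ always holds, this is equivalent to $|N_X|=|X|$; so closed sets are stable sets, minimal ones are sink SCCs, and inclusion-minimality suffices for both facts the algorithm needs ($G[N_X]$ connected, and no proper stable subset inside the $X$-part). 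This replaces the submodular-minimization black box by an elementary SCC computation. More significantly, your decomposition lemma --- no exchange can ever cross the boundary of $N_X$, because agents of $N_X$ hold items of $X$ in every assignment and those items are unacceptable outside $N_X$ --- is graph-independent and gives a clean if-and-only-if split into the instances on $(N_X,X)$ and $(N\setminus N_X, M\setminus X)$. The paper's corresponding step, Lemma~\ref{lem:propstab-impossible}, is only a necessary condition and its proof leans on the tree structure (the unique edge joining a component $N^i$ to $N_X$); the paper then recurses on all $\ell+1$ pieces, including the $X$-part, whereas you correctly short-circuit the $X$-part as an automatic yes-instance via Theorem~\ref{thm:chartree}. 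Your repeated per-item connectivity test plays exactly the role of the paper's Step~2 check, and your decomposition lemma is what makes it sound.

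One loose end needs tightening. Theorem~\ref{thm:chartree} is stated for \emph{trees}, but your recursive instance lives on $G-N_X$, which is in general a disconnected forest, and you invoke the theorem whenever no proper stable subset is found. This is not a real obstruction, but it requires an argument you did not give: after your preprocessing, each $N_j$ lies inside a single component of the current graph, so if the graph has two or more nonempty components, the set of items held by the agents of any one component is a proper stable subset. Hence ``no proper stable subset'' can only occur when the current graph is connected, i.e., a tree, and Theorem~\ref{thm:chartree} legitimately applies. (Alternatively, split the forest into its tree components and recurse on each; this is what the paper effectively does by recursing on $G[N^1],\dots,G[N^\ell]$ separately.) With that sentence added, your proof is complete.
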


Recall that we may assume that $G[N_j]$ is connected for every $j \in M$. 
To prove Theorem \ref{thm:tree-algo}, we first give a polynomial-time algorithm to find a proper stable subset of items, if it exists.

\begin{lemma}
\label{lem:propstab-algo}
We can determine in polynomial time whether
for a given instance $(N, M, \NN, G, a, b)$, there exists a proper stable subset of items and find one with minimum size 
if it exists, 
when $G$ is a tree.
\end{lemma}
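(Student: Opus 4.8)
The goal is to find a minimum-size proper stable subset of items, or report none exists. Recall that $X \subseteq M$ is stable if $|X| = |N_X|$ and proper if $\emptyset \neq X \subsetneq M$. A stable set is exactly a set $X$ for which Hall's inequality $|X| \leq |N_X|$ holds with equality; since we have already fixed an assignment $a$, every subset $X$ satisfies $|X| \leq |N_X|$, so stable sets are the ``tight'' sets where the inequality is equality.

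The plan is to reduce the search for a minimum tight set to a sequence of minimum-cut / maximum-flow computations on the bipartite graph $H$. First I would recall the standard connection between Hall-tight sets and minimum vertex covers (or closed sets) in a bipartite graph. Fix the perfect matching corresponding to $a$. For a candidate item $j \in M$, I want the minimum-size stable set $X$ with $j \in X$. The key observation is that $X$ is stable and contains $j$ precisely when $N_X$ is a minimal ``deficiency-free'' neighborhood: equivalently, $X \cup N_X$ forms a closed set under the relation ``follow a matching edge forward, then a non-matching edge backward.'' Concretely, build a directed graph $D$ on $N \cup M$: orient each matching edge from its item endpoint to its agent endpoint, and each non-matching edge from agent to item. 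Then the sets of the form $X \cup N_X$ with $X$ stable correspond exactly to the closed (reachable) vertex sets in $D$, because stability $|X| = |N_X|$ together with the matching forces every agent in $N_X$ to be matched into $X$, so the neighborhood cannot leak outside.

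\begin{lemma*}
For each fixed $j \in M$, the unique minimal stable set containing $j$ is $X_j := a^{-1}(R_j) \cap \{\text{items reachable from } j\}$, where $R_j$ is the set of vertices reachable from $j$ in $D$; and $X_j$ is stable with $N_{X_j}$ equal to the agent-part of $R_j$.
\end{lemma*}

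\noindent Assuming this reachability characterization, the algorithm is immediate: for each $j \in M$, compute $R_j$ by a single graph search in $D$ (linear time), read off the item-part as the minimal stable set $X_j$ containing $j$, and check whether it is proper, i.e.\ $X_j \neq M$ (it is automatically nonempty since $j \in X_j$). Among all $j$ for which $X_j$ is proper, return the one minimizing $|X_j|$. If no $X_j$ is proper for any $j$, then no proper stable subset exists at all, because any proper stable subset $X$ contains some item $j$ and then contains the minimal stable set $X_j$, forcing $X_j$ to be proper as well. This gives $O(|M|)$ searches, hence polynomial (indeed near-linear) total time, and since each $X_j$ is the \emph{minimal} stable set through $j$, the global minimum over $j$ is a minimum-size proper stable subset.

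The main obstacle I expect is proving the reachability characterization cleanly, namely that minimal stable sets coincide with reachable item-sets in $D$ and that $X_j$ as defined is genuinely stable. The forward direction (a reachable set is stable) follows because every agent reached via a non-matching edge is then matched to an item which is again reached, so the matching restricts to a bijection between the item-part and agent-part of $R_j$, giving $|X_j| = |N_{X_j}|$. The reverse direction (minimality and that any stable set containing $j$ contains $X_j$) requires showing that stability prevents the neighborhood from extending beyond the reachable region: if $X$ is stable with $j \in X$, then the matching is a perfect matching between $X$ and $N_X$, so starting from $j$ every alternating search stays inside $X \cup N_X$, whence $R_j \subseteq X \cup N_X$ and thus $X_j \subseteq X$. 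I do not expect to need the tree structure of $G$ here at all, since this is a pure statement about the bipartite graph $H$ and the matching $a$; the tree hypothesis is used only in Theorem~\ref{thm:chartree}, which the main algorithm (Theorem~\ref{thm:tree-algo}) invokes afterwards.
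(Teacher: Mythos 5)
Your high-level plan---replace heavy machinery by reachability in an auxiliary digraph built from the matching $a$, compute for each $j \in M$ the unique minimal stable set containing $j$, and minimize over $j$---is a sound and genuinely different route from the paper's (the paper defines $f_j(X) = |N_{X\cup\{j\}}| - |X\cup\{j\}|$, notes it is submodular, nonnegative, with minimum value $0$, and finds its unique minimum-size minimizer by general submodular function minimization). Your outer logic is also right, including the observation that the tree structure of $G$ plays no role. But the key reachability lemma is \emph{false with the orientation you specify}, and the algorithm as written gives wrong answers. You orient matching edges item $\to$ agent and non-matching edges agent $\to$ item. Under this orientation, from an agent one can walk to \emph{all} of its acceptable items, so the reachable set from $j$ is the closure of $\{j\}$ under ``go to the matched agent, then to all items he accepts''; this characterizes tight sets on the \emph{agent} side (sets $Y \subseteq N$ with $|Y| = |\bigcup_{i \in Y} M_i|$), not stable item sets. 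Concretely, take $N = \{1,2\}$, $M = \{x,y\}$, $N_x = \{1\}$, $N_y = \{1,2\}$, $a(1)=x$, $a(2)=y$. The stable sets are $\{x\}$ and $M$. Your digraph has arcs $x \to 1$, $y \to 2$ (matching) and $1 \to y$ (non-matching). Reachability from $x$ yields item-part $\{x,y\}=M$, so your algorithm misses the proper stable set $\{x\}$; reachability from $y$ yields item-part $\{y\}$, which your algorithm would return as a minimum-size proper stable set even though $\{y\}$ is \emph{not} stable ($|N_y|=2$). So the algorithm fails in both directions on a two-agent instance. (The formula $X_j := a^{-1}(R_j) \cap \{\text{items reachable from } j\}$ is also garbled---$a^{-1}$ of items gives agents---but that is cosmetic next to the orientation error.)

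The fix is to reverse the roles of the two edge types: from an \emph{item} one must be able to reach \emph{all} adjacent agents (this encodes the definition of $N_X$), while from an \emph{agent} one may move only to its \emph{matched} item (this encodes stability, since $|X|=|N_X|$ forces every agent of $N_X$ to be matched into $X$). Equivalently, orient every edge of $H$ from item to agent, and additionally orient each matching edge from agent to item. With this corrected digraph, your own closure argument goes through verbatim: the reachable set from $j$ equals $X_j \cup N_{X_j}$ for a stable $X_j \ni j$ (the matching gives a bijection between the two parts), and for any stable $X \ni j$ the set $X \cup N_X$ is closed, so it contains the reachable set, proving minimality and uniqueness of $X_j$. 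On the counterexample above this gives $R_x = \{x,1\}$ and $R_y = \{x,y,1,2\}$, which is correct. Once repaired, your argument is a legitimate alternative proof of the lemma, and arguably a better one: it runs in near-linear time via $|M|$ graph searches instead of invoking a polynomial-time submodular function minimization oracle, while establishing the same structural facts (uniqueness of the minimum stable set through each item) that the paper derives from the lattice of submodular minimizers.
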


Below we present a proof for Lemma~\ref{lem:propstab-algo} using submodular functions.
Before the proof, we summarize definitions and properties of submodular functions that we use in the proof.

For a finite set $\Xi$, the \emph{power set} of $\Xi$ is the family of all subsets of $\Xi$ and denoted by $2^\Xi$.
A function $f\colon 2^\Xi \to \mathbb{R}$ is \emph{submodular} if $f(X)+f(Y) \geq f(X\cup Y)+f(X\cap Y)$ for all $X, Y \subseteq \Xi$. 
The submodular function minimization is a problem to find a set $X^* \subseteq \Xi$ such that $f(X^*) \leq f(X)$ for all $X \subseteq \Xi$; such a set $X^*$ is a \emph{minimizer} of $f$.
Here, the submodular function $f$ is not given explicitly, but it is given as oracle access.
Namely, we assume that we may retrieve the value $f(X)$ for each set $X \subseteq \Xi$ in polynomial time.

A minimizer of a submodular function $f$ does not have to be unique.
If $X^*$ and $Y^*$ are minimizers of $f$, then $X^* \cup Y^*$ and $X^* \cap Y^*$ are also minimizers of $f$, which can easily be seen from the submodularity of $f$.
This implies that there exists a unique minimum-size minimizer of any submodular function.
A minimum-size minimizer of a submodular function (given as oracle access) can be obtained in polynomial time~\cite{murota-dcabook}.

\begin{proof}[Proof of Lemma~\ref{lem:propstab-algo}]
For each item $j \in M$, we define the function $f_j\colon 2^{M\setminus \{j\}} \to \mathbb{R}$ as
\[
f_j(X) = |N_{X\cup \{j\}}| - |X\cup \{j\}|
\]
for all $X\subseteq M\setminus \{j\}$.
Since $H$ has the assignment $a$, $f_j(X) \geq 0$ for all $X \subseteq M \setminus \{j\}$ by Hall's marriage theorem.
Thus, since $f_j(M\setminus \{j\}) = 0$, the minimum value of $f_j$ is zero. 
Notice that 
$f_j(X) =0$ if and only if 
$X\cup \{j\}$ is stable.

It is easy to see that the function $f_j$ is submodular, and for any submodular function, a unique minimum-size minimizer can be found in polynomial time as noted above. 
Let $X_j$ be the unique minimum-size minimizer of $f_j$ and
let $X^*_j = X_j \cup \{j\}$.
Then, $X^*_j$ is the unique minimum-size stable 
subset containing $j$. 

Let $j^* \in M$ be an item that minimizes $|X^*_{j^*}|$. 
Since $X^*_j$ is the unique minimum-size stable 
subset containing $j$ for each $j \in M$, 
$X^*_{j^*}$ is the minimum-size nonempty stable subset of items. 
Therefore, a proper stable subset exists if and only if $X^*_{j^*} \not= M$, which can be determined in polynomial time by computing $X^*_{j^*}$. 
Furthermore, if $X^*_{j^*} \not= M$, then $X^*_{j^*}$ is 
a proper stable subset with minimum size.
\end{proof}

For our algorithm, we first decide whether, for a given instance $(N, M, \NN, G, a, b)$, there exists a proper stable subset.
If none exists, then Theorem \ref{thm:chartree} implies that $a$ can be reconfigured to $b$, and we are done.
Assume that there exists a proper stable subset of items for the instance.
Let $X$ be one with minimum size.

We first observe that, by the minimality, $G[N_X]$ is connected.
To see this, assume to the contrary that $G[N_X]$ is not connected. 
Let $(Y_1, \dots , Y_p)$ be the partition of $X$ such that $G[N_{Y_t}]$ forms a connected component of $G[N_X]$ for each $t \in \{1, \dots , p\}$, where $p \ge 2$. 
Note that such a partition exists, because $G[N_{j'}]$ is connected for all $j' \in X$. 
Since $X$ is a minimum-size proper stable set, it holds that $|N_{Y_t}|>|Y_t|$ for all $t \in \{1, \dots , p\}$.
This implies that $|N_X| - |X| = \sum_{t=1}^p (|N_{Y_t}| - |Y_t|) > 0$, which is a contradiction.

We then apply our algorithm recursively to the instances obtained by $G[N_X]$ and $G[N\setminus N_X]$, respectively.
Here, $G[N\setminus N_X]$ consists of several connected components, whose vertex sets are denoted by $N^1,\dots, N^\ell$, for some $\ell \geq 1$, and $G[N\setminus N_X]$ yields $\ell$ instances.

The following lemma is crucial.
For $i=1,\dots, \ell$, define $M^i = a(N^i)$.
\begin{lemma}
\label{lem:propstab-impossible}
Let $(N, M, \NN, G, a, b)$ be an instance such that 
$G$ is a tree and
let $X$ be a proper stable subset of items.
If there exists an item $j \in M^i$ such that
$b^{-1}(j) \not\in N^i$,
then $a$ cannot be reconfigured to $b$.
\end{lemma}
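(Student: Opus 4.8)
The plan is to isolate an invariant that is preserved by every legal exchange and that already fails at the target $b$. Write $R := N_X$. Since $X$ is stable we have $|R| = |N_X| = |X|$, and since every item $j \in X$ satisfies $N_j \subseteq N_X = R$, in \emph{any} assignment $c$ each item of $X$ must be given to an agent of $R$; as $c$ is a bijection and $|X| = |R|$, this forces $c^{-1}(X) = R$, equivalently $c(R) = X$. Thus in every assignment the agents of $R$ hold exactly the items of $X$, and the agents of $N \setminus R$ hold exactly the items of $M \setminus X$.

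Next I would show that no legal exchange moves an item across the boundary of $R$. Suppose an exchange swaps the items of two adjacent agents $u$ and $v$ with $u \in R$ and $v \in N \setminus R$, in some reachable assignment $c$. By the invariant $c(u) \in X$, so $N_{c(u)} \subseteq R$, and in particular $v \notin N_{c(u)}$, i.e.\ $v$ does not accept $c(u)$. Hence this exchange is not legal. Consequently every legal exchange is internal either to $R$ or to $N \setminus R$. Moreover, because $N^1, \dots, N^\ell$ are the connected components of $G[N \setminus R]$, every edge of $G$ with one endpoint in $N^i$ has its other endpoint in $N^i \cup R$; so an exchange internal to $N \setminus R$ takes place within a single component $N^i$.

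From here the conclusion follows by a routine induction on the length of a reconfiguration sequence from $a$. The base case is $a(N^i) = M^i$ by definition of $M^i$. For the inductive step, an exchange internal to $R$ leaves every $c(N^i)$ unchanged, while an exchange internal to some $N^i$ merely permutes items among the agents of $N^i$ and hence preserves the set $c(N^i)$; in either case $c(N^i) = M^i$ is maintained for all $i$. Therefore every assignment reachable from $a$ satisfies $c(N^i) = M^i$ for all $i$. Now if there is an item $j \in M^i$ with $b^{-1}(j) \notin N^i$, then $b(N^i) \neq M^i$, so $b$ is not reachable from $a$, as claimed.

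The main obstacle is the first step: recognizing and proving the invariant $c(R) = X$, which rests on stability together with $N_j \subseteq R$ for $j \in X$. Once that invariant is in hand, the statement that $R$ acts as an impassable separator — and therefore that the item set of each component $N^i$ is frozen — is essentially forced, and the remaining induction is bookkeeping. I expect no genuine difficulty beyond being careful that the boundary-crossing exchange fails already at the level of acceptability, so that I never need to reason about the validity of a hypothetical post-exchange configuration.
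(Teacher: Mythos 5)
Your proof is correct, and it takes a noticeably different (and in fact more general) route than the paper. Both arguments hinge on the same two observations: stability forces the invariant $c^{-1}(X)=N_X$ in \emph{every} assignment $c$ (items of $X$ are acceptable only inside $N_X$, and $|X|=|N_X|$ plus bijectivity does the rest), and an exchange across the boundary of $N_X$ is illegal because the item leaving $N_X$ lies in $X$ and is unacceptable outside. The difference is in how this is deployed. The paper argues by contradiction along a hypothetical reconfiguration sequence: it tracks the specific item $j$, uses the tree structure to identify the unique edge $(i_1,i'_1)$ between $N^1$ and $N_X$ through which $j$ would have to pass, and derives the contradiction at that single crossing step. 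You instead prove a global invariant --- every legal exchange is internal to $N_X$ or to a single component $N^i$, hence $c(N^i)=M^i$ for every reachable $c$ --- by induction on the sequence length. Your formulation never invokes the hypothesis that $G$ is a tree: it only uses that the $N^i$ are connected components of $G[N\setminus N_X]$, so your argument establishes the lemma for an arbitrary graph $G$, not just trees. This also sidesteps a small imprecision in the paper: the claimed \emph{uniqueness} of the edge between $N^1$ and $N_X$ tacitly assumes $G[N_X]$ is connected, which holds for the minimum-size stable set used by the algorithm but is not part of the lemma's hypotheses; in your proof no such uniqueness (or even existence of a designated crossing edge) is ever needed. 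The paper's version buys a short, pointed contradiction; yours buys generality and robustness, and would let the lemma be reused verbatim if the algorithmic framework were extended beyond trees.
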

\begin{proof}
For simplicity, we may assume that $j$ is in $M^1$ and $b^{-1}(j) \not\in N^1$.
Since $G$ is a tree, there exists a unique edge $(i_1, i'_1)$ between $N^1$ and $N_X$, where $i_1\in N^1$ and $i'_1\in N_X$.

Suppose that $a$ can be reconfigured to $b$
by a reconfiguration sequence
$a=a_0 \to a_1 \to \dots \to a_\ell=b$.
Then, in the reconfiguration sequence, there exists an index $t$ such that $a^{-1}_{t-1}(j) = i_1$ and $a^{-1}_{t}(j) = i'_1$.
That is, $j=a_{t-1}(i_1)=a_{t}(i'_1)$. 
This means that there exists an item $j^\prime \in M$ such that
$j^\prime = a_{t-1}(i'_1) = a_{t}(i_1)$,
i.e., from the assignment $a_{t-1}$ to $a_{t}$, the agents
$i_1$ and $i'_1$ exchange the items $j$ and $j^\prime$.
Since $X$ is stable, we see that $a^{-1}_{t-1}(X)=N_X$, and hence $j^\prime \in X$ holds.
Since $j^\prime = a_{t}(i_1)$, 
$i_1\in N_{j^\prime} \subseteq N_X$.
This contradicts that $i_1$ is in $N^1$.
\end{proof}

Armed with Lemmas \ref{lem:propstab-algo} and \ref{lem:propstab-impossible}, we are ready for describing our algorithm.

\begin{enumerate}[Step 1.]
\item Decide whether a proper stable subset exists. If there is none, then we answer Yes.
Otherwise, let $X$ be a proper stable subset with minimum size, 
and proceed to Step~2.
\item The subgraph $G[N\setminus N_X]$ consists of several connected components, whose vertex sets are denoted by $N^1,\dots, N^\ell$, for some $\ell \geq 1$.
For $i=1,\dots, \ell$, define $M^i = a(N^i)$.
Check whether there exists an item $j \in M^i$ such that
$b^{-1}(j) \not\in N^i$.
If there exists such an item, then we answer No.
Otherwise, proceed to Step 3.
\item We construct $\ell+1$ smaller instances as follows.
The first instance is $(N_X, X, \NN_X, G[N_X], a_X, b_X)$, where
$\NN_X = (N_j \mid j \in X)$ and $a_X, b_X\colon N_X \to X$ are the restrictions of $a, b$ to $N_X$, respectively. 
The other instances are $(N^i, M^i, \NN^i, G[N^i], a_i, b_i)$ for $i=1,\dots, \ell$, where
$\NN^i = (N_j \cap N^i \mid j \in M^i)$ and
$a_i, b_i\colon N^i \to M^i$ are the restrictions of $a, b$ to $N^i$, respectively.
By the assumption of Step 3, those instances are well-defined.
Those $\ell + 1$ instances are solved recursively.
If the answers to the smaller instances are all Yes, then the answer to the whole instance is also Yes.
Otherwise, the answer to the whole instance is No.
\end{enumerate}

The correctness is immediate from Theorem \ref{thm:chartree} and Lemma \ref{lem:propstab-impossible}, and the running time is polynomial by Lemma \ref{lem:propstab-algo}.
Thus, the proof of Theorem \ref{thm:tree-algo} is completed.

\section[Complete Graphs: PSPACE-Completeness]{Complete Graphs: $\mathsf{PSPACE}$-Completeness}
\label{setion:pspace_complete}

In this section, we prove that 
our problem is $\mathsf{PSPACE}$-complete 
even when $G$ is a complete graph.

\begin{theorem}
The problem is $\mathsf{PSPACE}$-complete even if $G$ is a complete graph.
\end{theorem}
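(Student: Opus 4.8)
The plan is to prove both membership in $\mathsf{PSPACE}$ and $\mathsf{PSPACE}$-hardness; the former is routine and essentially all the work lies in the latter. For membership, I would observe that an assignment $a\colon N\to M$ is encodable in $O(|N|\log|N|)$ bits, and that the relation $a\to a'$ is polynomial-time checkable (verify that $a$ and $a'$ differ on exactly two agents $i,i'$, that the swapped items are mutually acceptable, and that $\{i,i'\}\in E$). The question is then reachability in the exponentially large but implicitly described reconfiguration graph whose vertices are assignments and whose edges are given by $\to$. Since the number of assignments is at most $|N|!$, a nondeterministic procedure can guess the sequence one step at a time, storing only the current assignment and a step counter of $O(|N|\log|N|)$ bits; this runs in nondeterministic polynomial space, and $\mathsf{NPSPACE}=\mathsf{PSPACE}$ by Savitch's theorem yields membership.

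For hardness I would first reformulate the complete-graph case in matching terms, since this guides the gadget construction. When $G$ is complete the condition $\{i,i'\}\in E$ is vacuous, so a move swaps the items of any two agents $i,i'$ and is legal exactly when $a(i')\in M_i$ and $a(i)\in M_{i'}$; in the bipartite graph $H$ this is precisely a flip of the corresponding perfect matching along an alternating cycle of length four through $i,\,a(i),\,i',\,a(i')$. The problem thus becomes: can one perfect matching of a bipartite graph be transformed into another by alternating $4$-cycle flips? Because the unconstrained version of such a swap puzzle would be easy (the symmetric difference of two perfect matchings decomposes into alternating cycles), the hardness must come entirely from the acceptability constraints, which makes a reduction from a constraint-driven reconfiguration problem natural. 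My first choice of source is Nondeterministic Constraint Logic (NCL), whose single move reverses one edge while preserving every vertex in-flow constraint, and is therefore local enough to be simulated by single swaps.

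The reduction would encode the orientation of each NCL edge by the position of a dedicated item among two ``slots'' (agents), realizing a reversal by one swap with a complementary item so as to preserve the bijective structure of an assignment. Each \textsc{and}/\textsc{or} vertex is simulated by a gadget of agents and items whose acceptability sets $M_i$ are engineered so that a swap encoding an edge reversal is legal \emph{if and only if} the in-flow constraint at that vertex remains satisfied; configurations encoding invalid orientations are rendered unreachable because the items they would require are unacceptable to the relevant agents. The initial and target assignments then encode the given initial and target NCL orientations, and one verifies that $G$ may be taken complete without this introducing any unintended legal swaps.

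The hard part will be making this simulation tight in both directions, since reconfiguration reductions must preserve reachability rather than mere existence of a solution. I must guarantee that every legal swap in the constructed instance corresponds to a legal NCL move, so that no ``cheating'' exchange can bypass a constraint gadget or entangle two different gadgets, and conversely that every legal NCL move is implementable by a short, reversible sequence of swaps confined to the intended gadget. This forces the gadgets to have no spurious intermediate assignments and to be rigid in the sense that the only reachable configurations are the faithful encodings; establishing this local-to-global faithfulness — rather than the individual gadget behaviors, which are mechanical — is where I expect the principal difficulty to lie.
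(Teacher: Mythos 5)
Your membership argument is fine, and your reformulation of the complete-graph case is exactly the right first step: when $G$ is complete, the edge condition is vacuous and a legal exchange is precisely a flip of the perfect matching of $H$ along an alternating cycle of length four, so the problem coincides with bipartite perfect matching reconfiguration. This is the same observation the paper makes.

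However, from that point on your hardness argument has a genuine gap. The paper finishes immediately by invoking the known result that bipartite perfect matching reconfiguration is $\mathsf{PSPACE}$-complete (Bonamy et al.\ 2019, cited in the paper as \cite{BonamyBHIKMMW19}); the reduction is then the identity-like translation you described, and soundness and completeness are trivial because the two move sets correspond exactly. You instead propose to establish the hardness of the matching-flip problem from scratch by a reduction from Nondeterministic Constraint Logic, but you never construct the \textsc{and}/\textsc{or} gadgets, never specify how a single NCL edge reversal is simulated by swaps, and never prove the ``no cheating swaps'' rigidity property --- and you explicitly acknowledge that this local-to-global faithfulness is the principal unresolved difficulty. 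That faithfulness is not a routine verification: designing matching gadgets that admit no spurious intermediate configurations is essentially the entire technical content of the cited paper, and in your setting it is aggravated by the fact that $G$ is complete, so every pair of agents is a potential swap partner and the acceptability sets alone must block all unintended interactions between gadgets. As written, your hardness section is a plausible research plan rather than a proof; to close the gap you must either carry out the gadget construction and its two-directional correctness proof in full, or do what the paper does and reduce from the already-known $\mathsf{PSPACE}$-complete matching reconfiguration problem.
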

\begin{proof}
The membership in $\mathsf{PSPACE}$ is immediate since each assignment can be encoded in polynomial space, and each swap can be performed in polynomial space (even in polynomial time). Thus, we concentrate on $\mathsf{PSPACE}$-hardness.

The following ``bipartite perfect matching reconfiguration problem'' is known to be  $\mathsf{PSPACE}$-complete~\cite{BonamyBHIKMMW19} .
We are given a bipartite graph $H^{\prime}$ and two perfect matchings $M_1, M_2$ of $H^{\prime}$, and we are asked to decide whether $M_1$ can be transformed to $M_2$ by a sequence of exchanges of
two matching edges with two non-matching edges
such that those four edges form a cycle of $H^{\prime}$.

From an instance $(H^{\prime}, M_1, M_2)$ of the bipartite perfect matching reconfiguration problem, we construct an instance $(N, M, \NN, G, a, b)$ of our problem where $G$ is a complete graph.

Denote two color classes (partite sets) of $H^{\prime}$ by $A$ and $B$.
Then, let $N=A$ and $M=B$.
Since $M_1, M_2$ are perfect matchings of $H^{\prime}$, it holds that
$|N|=|A|=|B|=|M|$.
For each $j \in B=M$, we define $N_j$ as the set of vertices in $A=N$ that are adjacent to $j$ in $H^{\prime}$.
Then, $\NN$ is the family $(N_j \mid j \in M)$.
The assignments $a,b$ are defined by $M_1, M_2$ as
$a(i)=j$ if and only if $\{i,j\} \in M_1$ and
$b(i)=j$ if and only if $\{i,j\} \in M_2$.
The graph $G$ is a complete graph on $N$.
This finishes the construction of the instance.
We emphasize that $G[N_j]$ is indeed connected for every $j \in M$ since $G$ is a complete graph.

Observe that an exchange operation in the bipartite perfect matching reconfiguration problem precisely corresponds to an exchange operation in our problem.
Thus, the reduction is sound and complete, and the proof is finished.
\end{proof}

Note that the bipartite perfect matching reconfiguration problem is $\mathsf{PSPACE}$-complete even when the input graph has a bounded bandwidth and maximum degree five~\cite{BonamyBHIKMMW19}.

For strict preferences, the problem in complete graphs is $\mathsf{NP}$-complete~\cite{MB20}.
Thus, we encounter a huge difference between the complexity status for dichotomous preferences ($\mathsf{PSPACE}$-complete) and strict preferences ($\mathsf{NP}$-complete).
This is because with strict preferences each exchange strictly improves the utility of the two agents involved in the exchange, and thus the length of a reconfiguration sequence is always bounded by a polynomial of the number of agents.
On the other hand, with dichotomous preferences, a reconfiguration sequence can be exponentially long.

\section{Concluding Remarks}

Further studies are required for the following research directions.
The complexity status for other types of graphs $G$ is not known.
The shortest length of a reconfiguration sequence is not known even for trees.
In particular, when there is a reconfiguration sequence, we do not know whether the shortest length is bounded by a polynomial in $|N|$. 
We may also study other types of preferences.

\bibliographystyle{plain}
\bibliography{arxiv_prima_v2}

\end{document}